\documentclass{article}
\usepackage[a4paper]{geometry}
\usepackage{hyperref}
\usepackage{color}
\usepackage{multicol}
\usepackage{subcaption}
\usepackage{wrapfig}
\usepackage{amsmath}
\usepackage{amsthm}
\usepackage{tikz}
\usepackage{amsmath}
\usepackage{amsfonts}
\usepackage{epstopdf}
\usetikzlibrary{shapes,arrows,automata}
\usepackage{tikz-cd}
\usepackage{multirow}
\usepackage{appendix}

\newtheorem{theorem}{Theorem}[section]

\title{A Dynamical Framework for Modeling Fear of Infection and Frustration with Social Distancing in COVID-19 Spread}
\author{Matthew D. Johnston and Bruce Pell\\ \\
Lawrence Technological University\\
21000 W 10 Mile Rd.\\
Southfield, MI 48075\\
{\tt mjohnsto1@ltu.edu}\\
{\tt bpell@ltu.edu}}
\date{\today}

\usepackage{graphicx}

\begin{document}

\maketitle

\begin{abstract}

In this paper, we introduce a novel modeling framework for incorporating fear of infection and frustration with social distancing into disease dynamics. We show that the resulting SEIR behavior-perception model has three principal modes of qualitative behavior---\emph{no outbreak}, \emph{controlled outbreak}, and \emph{uncontrolled outbreak}. We also demonstrate that the model can produce transient and sustained waves of infection consistent with secondary outbreaks. We fit the model to cumulative COVID-19 case and mortality data from several regions. Our analysis suggests that regions which experience a significant decline after the first wave of infection, such as Canada and Israel, are more likely to contain secondary waves of infection, whereas regions which only achieve moderate success in mitigating the disease's spread initially, such as the United States, are likely to experience substantial secondary waves or uncontrolled outbreaks.
\end{abstract}

\section{Background}

Since being first detected in Wuhan, China, in December 2019, severe acute respiratory syndrome coronavirus 2 (SARS-CoV-2), and the resulting disease COVID-19, has spread rapidly around the globe. Without a vaccine or generally effective treatment, the unprecedented international mitigation effort has instead focused on reducing transmission through travel restrictions, mandatory quarantines, work-from-home initiatives, school closures, and social distancing practices. These measures have in turn had detrimental effects on economic productivity, job stability, and overall quality of life, which in some regions has limited public willingness to abide by social distancing guidelines, even as cases have continued to rise. To date, COVID-19 has afflicted at least 20 million people worldwide and resulted in over 700,000 deaths \cite{worldometers}.

Mathematical modeling has played a significant role in understanding the primary transmission pathways and epidemiological parameters of the COVID-19 pandemic. Studies have estimated the basic reproductive number, a key measure of the transmissibility of a disease, of SARS-CoV-2  \cite{Kucharski2020,zhang2020,Zhuang2020,Lv2020} and projected the disease's spread under a wide variety of public policy intervention scenarios, including variances in social distancing policies, travel restrictions, and face mask utilization \cite{Barbarossa2020,bouchnita2020,chang2020,djidjou2020,engbert2020,Eikenberry2020}. Forecasting the extent of the spread of COVID-19, however, has been complicated by many factors, including evidence of asymptomatic spread \cite{Park2020,Bai2020}, issues with parameter identifiability \cite{Roda2020,Massonis2020}, and the uncertain mechanisms by which social behaviors have altered the spread of the disease to date \cite{Holmdahl2020,Petropoulos2020,Wynants2020}.


Several methods have been proposed in the research literature for incorporating and evaluating the role of social perception and behavior changes in dynamical models of emerging infectious disease (see the excellent review paper \cite{Wang2015}). The papers \cite{Epstein2008,Perra2011,Valle2013} incorporate social behavior changes into an SIR compartmental model by dividing the susceptible and infectious classes into individuals who initiate behavior changes to reduce transmission and those who do not, and allowing behavior change to transmit through social contact like a contagion. The papers \cite{Cui2008,Fenichel2011,Fenichel2013,Poletti2013,Sun2011} instead incorporate social distancing behavior by allowing perception to evolve as an independent state-dependent variable and having it feedback directly into the transmission rate. Other studies have focused on regional movement patterns, effectiveness of social distancing, and the summer release of school children followed by their return to classes in the fall~\cite{HerreraValdez2011}. The more recent study \cite{Bauch2020} specific to the study of COVID-19 incorporates public support for social distancing as a function of both infection level and economic losses.

In this paper, we present a novel modeling framework for incorporating social perception and behavior \eqref{social} into a compartmental SEIR model \eqref{seir}. Our model incorporates the effects of \emph{fear of infection} ($P_I$) and \emph{frustration with social distancing} ($P_{\omega}$) on \emph{social distancing}  ($\omega$), and the effects of social distancing on disease dynamics by modifying the transmission rate ($\beta$).  Analysis of the corresponding system of differential equations \eqref{feedback-de} suggests that fear of infection can be an effective mitigator of disease spread but that a high level of frustration with social distancing can overwhelm these efforts and result in an uncontrolled outbreak (see Figure \ref{figure1}). Our analysis also suggests that delays in social feedback can lead to transient and sustained waves of infection, even in populations where the disease's spread is controlled (see Figure \ref{figure2}).

We fit the model to cumulative COVID-19 case and mortality data across several regions: Canada, the United States, Israel, Michigan, California and Italy. (See Figures \ref{figure4} and \ref{figure5}.) Our analysis suggests that, although the capacity for secondary waves of infection after controlling the initial outbreak is widespread, the magnitude of the reduction in infection after the initial outbreak is a strong indicator of a society's ability to be able to mitigate the secondary waves. Regions which have significant reductions in infection after the initial outbreak, such as Canada and Israel, are predicted to have modest and controllable secondary waves. Countries which had only modest reductions in infection levels after the initial outbreak, such as the United States, are predicted to have large secondary waves which threaten to become uncontrolled outbreaks. 

This paper is organized as follows. In Section \ref{sec:model} we develop the SEIR behavior-perception system \eqref{feedback-de}, which incorporates social distancing, fear of infection, and frustration with social distancing as time-dependent variables capable of influencing the dynamics of disease spread. In Section \ref{sec:analysis}, we analyze the SEIR behavior-perception system \eqref{feedback-de} and demonstrate the admissible behaviors in a variety of parameter regions, leading to controlled outbreaks, uncontrolled outbreaks, and sustained waves of infection. In Section \ref{sec:data}, we fit the model to cumulative COVID-19 case and mortality data from several regions to estimate key epidemiological and social perception and behavior parameters of the COVID-19 pandemic. In Section \ref{sec:conclusions}, we summarize our results and avenues for future work. In Appendix \ref{app:stability}, we present the mathematical details of the stability results found in Section \ref{sec:analysis}.

\section{Model}
\label{sec:model}

In this section, we introduce an SEIR behavior-perception feedback model of disease spread. The model consists of two parts: (a) an SEIR model for tracking the evolution of disease dynamics \eqref{seir}; and (b) a behavior-perception loop involving social distancing, fear of infection, and frustration with social distancing \eqref{social}.

\subsection{SEIR Model}

We consider the classic \emph{SEIR model} where the population is divided in four compartments: $S$ - \emph{Susceptible} to infection;  $E$ - \emph{Exposed} to infection but not yet symptomatic; $I$ - Actively \emph{infectious}; $R$ - \emph{Removed} from the infection \cite{Kermack1927}. This gives the following model:\\[-0.1in]
\begin{equation}
    \label{seir}
\begin{tikzcd}
\mbox{\fbox{\begin{tabular}{c}Susceptible \\($S$)\end{tabular}}} \arrow[r,"\beta"{name=beta}] & \mbox{\fbox{\begin{tabular}{c}Exposed\\($E$)\end{tabular}}} \arrow[r,"\lambda"] & \mbox{\fbox{\begin{tabular}{c}Infectious\\($I$)\end{tabular}}} \arrow[bend right=60,dashed,from=1-3,to=beta] \arrow[r,"\gamma"] & \mbox{\fbox{\begin{tabular}{c}Removed\\($R$)\end{tabular}}}
\end{tikzcd}
\end{equation}
For simplicity, we assume that only actively infectious individuals can transmit the infection to susceptible individuals, although we note that asymptomatic spread is suspected in COVID-19 \cite{Park2020,Bai2020}. We assume that removed individuals cannot become susceptible to the illness again. It is currently unclear whether COVID-19 may be contracted multiple times but reinfection is not believed to be a significant factor in disease spread \cite{Roy2020}. 

The dynamics of the system \eqref{seir} can be modeling by the following \emph{SEIR system}:
\begin{equation}
    \label{seir-de}
\left\{ \; \; \;
\begin{split}
    \frac{dS}{dt} &  = - \frac{\beta}{N} S I \\
    \frac{dE}{dt} & = \frac{\beta}{N} S I - \lambda E \\
    \frac{dI}{dt} & = \lambda E - \gamma I \\
    \frac{dR}{dt} & = \gamma I 
\end{split}
\right.
\end{equation}
where the parameters are as in Table \ref{table4}. Notice that the recovery/infectious period $\gamma^{-1}$ is the expected number of days until an individual is no longer infectious, regardless of whether that is a result of recovery, death, quarantine, or another means.

Although the SEIR system \eqref{seir-de} cannot be solved explicitly, the dynamics are well-understood. Trajectories with non-negative initial conditions stay non-negative and satisfy the population size conservation equation $N = S(t) + E(t) + I(t) + R(t)$. Trajectories asymptotically approach the steady state $(\bar{S},\bar{E},\bar{I},\bar{R}) = (N-R^*,0,0,R^*)$ where $R^*$ is the solution of $N - R^* = e^{-\frac{\beta}{N\gamma} R^*}$. The value of $R^*$ is the extent of the disease since it corresponds to the number of people who contracted the disease during its course.

The critical parameter for determining whether a disease will spread or die in a population is the \emph{basic reproductive number}, $\mathcal{R}_0$, which quantifies the expected number of secondary infections produced by one active infection entering a fully susceptible population. When $\mathcal{R}_0 > 1$ we expect an outbreak and when $\mathcal{R}_0 < 1$ we expect no outbreak. For the SEIR model \eqref{seir} and corresponding system \eqref{seir-de}, the basic reproductive number is $\mathcal{R}_0 = \frac{\beta}{\gamma}$, which can be calculated using the next-generation method \cite{Diekmann1990,VANDENDRIESSCHE2002,Heffernan2005}. It follows that there will be an outbreak in \eqref{seir-de} if $\beta > \gamma$ while the disease will dissipate without an outbreak if $\beta < \gamma$.

\subsection{SEIR Model With Behavior-Perception Feedback}

The SEIR model \eqref{seir} by itself does not account for the possibility that individuals may alter their behavior, and therefore the disease's trajectory, in response to the disease's spread. Since there is no vaccine or generally effective treatment for the novel coronavirus SARS-CoV-2,  mitigation efforts have necessarily focused on modifications to social behavior such as social distancing, hand washing, and face mask utilization. 



We extend the SEIR model \eqref{seir-de} to incorporate the effects of social perception and behavior change over time. We introduce the following time-dependent variables: \emph{social distancing behavior} ($0 \leq \omega \leq 1$); \emph{perceived fear of infection} ($0 \leq P_I \leq 1$); and \emph{perceived frustration with social distancing} ($0 \leq P_\omega \leq 1$) and assume the following network of dependencies:

\begin{equation}
    \label{social}
\begin{tikzcd}
\mbox{\fbox{\begin{tabular}{c}Spread of\\disease\\$(\lambda E)$\end{tabular}}} \arrow[d,dashed,"+\; \; "'] & \mbox{\fbox{\begin{tabular}{c}Social\\distancing\\($\omega$)\end{tabular}}} \arrow[d,dashed,xshift=0.5ex,"\; \;+"] \arrow[l,dashed,"-"'] \\ \mbox{\fbox{\begin{tabular}{c}Fear of\\infection \\($P_I$)\end{tabular}}}  \arrow[ru,dashed,"+"] & \mbox{\fbox{\begin{tabular}{c}Frustration with\\social distancing\\($P_\omega$)\end{tabular}}} \arrow[u,dashed,xshift=-0.5ex,"- \; \;"]
\end{tikzcd}
\end{equation}
Each arrow indicates how the first quantity influences the second, with a positive label (+) indicating a positive influence and (-) indicating a negative influence.

\begin{table}[t!]
    \centering
    \begin{tabular}{|l|c|l|}
    \hline
    Variable & Units & Description \\
    \hline \hline
    $S \geq 0$ & people & Susceptible individuals \\
    $E \geq 0$ & people & Exposed (non-infectious) individuals \\
    $I \geq 0$ & people & Infectious individuals \\
    $R \geq 0$ & people & Removed individuals \\
    $\omega \in [0,1]$ & \% & Measure of social distancing behavior \\
    $P_I \in [0,1]$ & none & Measure of socially perceived fear of infection \\
    $P_{\omega} \in [0,1]$ & none & Measure of socially perceived frustration with social distancing \\
    $t \geq 0$ & days & Time since start of simulation \\
    \hline \hline
    Parameter & Units & Description  \\
    \hline \hline
    $\beta \geq 0$ & days$^{-1}$ & Transmission rate \\
    $\lambda^{-1} \geq 0$ & days & Incubation period \\
    $\gamma^{-1} \geq 0$ & days & Recovery/infectious period \\
    $k_{\omega} \geq 0$ & days$^{-1}$ & Rate of social distancing behavior change \\
    $k_{P_I} \geq 0$ & days$^{-1}$ & Rate of socially perceived fear of infection change \\
    $k_{P_{\omega}} \geq 0$ & days$^{-1}$ & Rate of socially perceived frustration with socially distancing change \\
    $N \geq 0$ & people & Total population size ($N = S + E +I + R$) \\
    $M \geq 0$ & people & Threshold value of new daily infections for fear of infection\\
    $q \geq 1$ & none & Steepness parameter for fear of infection change \\
    $\omega^* \in [0,1]$ & \% & Maximum reduction to social distancing due to frustration \\
    \hline
    \end{tabular}
    \caption{\small Variables and parameters for the SEIR system \eqref{seir-de} and the SEIR behavior-perception system \eqref{feedback-de}. The parameters $k_{\omega}$, $k_{P_I}$, and $k_{P_{\omega}}$ control the rate of social perception and behavior change. The parameter $M > 0$ is the threshold level of new infections governing social perception change while $q \geq 1$ controls the steepness of the switch from low perceived fear of infection to high perceived fear of infection through the Hill function \eqref{g}. The value of $\omega^*$ controls the extent to which frustration with social distancing reduces social distancing behavior.}
    \label{table4}
\end{table}

We now extend the SEIR system \eqref{seir-de} to include social perceptions and behavior. We assume that the rates of changes of $\omega$, $P_I$, and $P_\omega$ are influenced by variables with arrows leading to it in \eqref{social} and that, in the absence of disease, $\omega$, $P_I$, and $P_\omega$ will decay to zero. This gives the following \emph{SEIR behavior-perception system}:
\begin{equation}
    \label{feedback-de}
    \left\{ \; \; \;
    \begin{split}
    \frac{dS}{dt} &  = - \frac{\beta(1-\omega)}{N} S I \\
    \frac{dE}{dt} & = \frac{\beta(1-\omega)}{N} S I - \lambda E \\
    \frac{dI}{dt} & = \lambda E - \gamma I \\
    \frac{dR}{dt} & = \gamma I 
\end{split} \hspace{0.5in}
    \begin{split}
    \frac{d\omega}{dt} & = k_{\omega} \left( f(P_I,P_\omega) - \omega \right) \\
    \frac{dP_I}{dt} & = k_{P_I} \left( g(\lambda E) - P_I \right) \\
    \frac{dP_\omega}{dt} & = k_{P_\omega} \left( h(\omega) - P_\omega \right).
    \end{split}
    \right.
\end{equation}
\noindent 
Notice that the social perception variables $P_I$ and $P_{\omega}$ influence the social behavior variable $\omega$ which in turn influences the effective transmission rate $\beta(1-\omega)$, as outlined in \eqref{social}. Notice also that we use the daily number of new active cases $\lambda E$ as the catalyst for social perception change rather than the number of current active infections $I$, as used in \cite{Bauch2020}. We believe $\lambda E$ correlates better with publicly reported case incidence data than the current infection level $I$.

\begin{figure}[t!]
    \centering
    \includegraphics[width=6in]{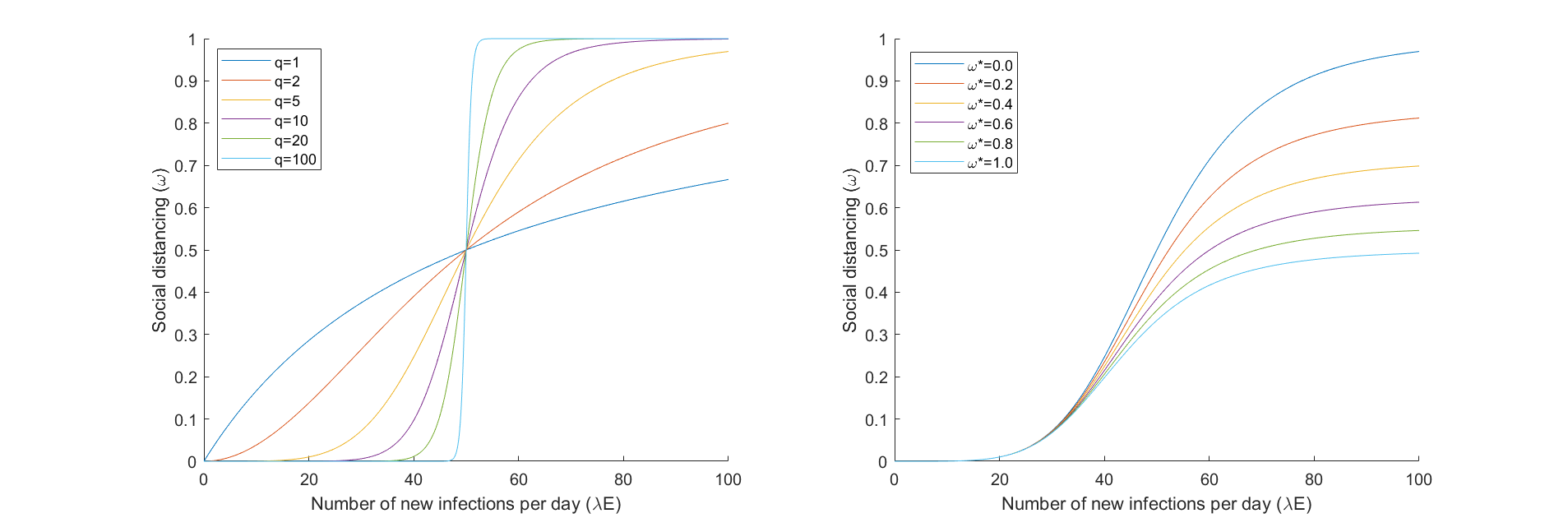}
    \caption{\footnotesize The relationship between new daily infections ($\lambda E$) and social distancing ($\omega$) in the SEIR behavior-perception model \eqref{feedback-de}, assuming the social perception and behavior variables $\omega$, $P_I$, and $P_\omega$ are at quasi-steady state \eqref{omega}. On the left, we take $M = 50$, $\omega^* = 0$, and vary $q$. As $q$ increases, the response of social distancing to new infections becomes sharper. On the right, we take $M = 50$, $q = 5$, and vary $\omega^*$. As $\omega^*$ increases, the effective level of social distancing decreases even when the level of new infections is high ($\lambda E > M$). Note that enforcing the bounds $\omega^* \in [ 0, 1 ]$ guarantees $\omega(t) \in [0, 1 ]$ for all $t \geq 0$ but that the social distancing response \eqref{omega} has a lower bound of $0.5$ even when new infections are high. To obtain a fully saturated quasi-steady state $\omega$ value lower than $0.5$, we can take $\omega^* > 1$ but note that this may yield a negative transient $\omega(t)$ value.}
    \label{figure6}
\end{figure}

We utilize the following functions $f$, $g$, and $h$:
\begin{eqnarray}
    \label{f}
    f(P_I,P_\omega) & = & P_I( 1 - \omega^* P_\omega) \\
    \label{g}
    g(\lambda E) & = &  \frac{(\lambda E)^q}{M^q + (\lambda E)^q}\\
\label{h}
    h(\omega) & = &  \omega. 
\end{eqnarray}
We use a Hill function \cite{Hi} on the level of new infections $\lambda E$ to control the perceived fear of infection \eqref{g} and note that this function become more ``switch-like'' around the threshold value $M$ as $q$ grows (see Figure \ref{figure6}). The parameters $k_{\omega}$, $k_{P_I}$, $k_{P_{\omega}}$, $q$, $M$, and $\omega^*$ are as in Table \ref{table4}.

To further illustrate how social perception and behavior influences disease dynamics, we take $\omega$, $P_I$, and $P_\omega$ at quasi-steady state in \eqref{feedback-de}. This gives the following relationship for the level of new daily infections $\lambda E$ on the social distancing variable $\omega$:
\begin{equation}
    \label{omega}
    \omega = \frac{(\lambda E)^q}{M^q + (1 + \omega^*) (\lambda E)^q}.
\end{equation}
Illustration of the effect of the parameters $q$, $M$, and $\omega^*$ on the relationship \eqref{omega} is contained in Figure \ref{figure6}. 

\section{Model Analysis}
\label{sec:analysis}

In this section, we analyze the SEIR behavior-perception system \eqref{feedback-de} by considering reduced models which simulate the early-stage dynamics of the outbreak. We show that there are three dominants modes of behavior---\emph{no outbreak}, \emph{controlled outbreak}, and \emph{uncontrolled outbreak}. We also show that the system \eqref{feedback-de} has the capacity for transient and sustained waves of infection, and investigate how this depends upon the delays in the social perception and behavior changes.

\subsection{Reduced Models}
\label{sec:reduced}


We reduce the behavior-perception model \eqref{feedback-de} to consider the early-stage dynamics when nearly everyone in the population is susceptible to illness. To accomplish this, we set $S = N$ and remove the equations for $S$ and $R$. To further investigate the mechanisms which contribute to secondary waves of infection, we consider three scenarios on the delays in social perception---\emph{no delays}, \emph{one delay}, and \emph{two delays}. In all cases, we state the model with frustration with social distancing included but can remove this variable by taking $\omega^* = 0$.

\paragraph{Model I: No delays.} We assume that the social perception and behavior variables in \eqref{feedback-de} operate on a significantly faster time scale than the disease dynamic variables. On the slow-time scale of the disease dynamics variables, this corresponds to taking $\omega$, $P_I$, and $P_{\omega}$ in \eqref{feedback-de} at steady state, which yields \eqref{omega}. 
Substituting \eqref{omega} into \eqref{feedback-de} and  removing $S$ and $R$ gives the following \emph{reduced no delay SEIR behavior-perception system}:
\begin{equation}
    \label{reduced-direct}
\left\{ \; \; \begin{split} \frac{dE}{dt} & = \beta\left( 1- \frac{(\lambda E)^q}{M^q + (1 + \omega^*) (\lambda E)^q}\right) I - \lambda E \\ \frac{dI}{dt} & = \lambda E - \gamma I.  
\end{split} \right.
\end{equation}
The behavior of this model is a reasonable approximation of that of \eqref{feedback-de} so long as the outbreak remains small ($S \approx N$) and the social perceptions and behavior variables evolve on a significantly faster time-scale than the disease dynamics variables.

\paragraph{Model II: One delay.} We assume that fear of infection evolves on a faster time scale than the remainder of the variables, which corresponds to taking $P_I$ at steady state in \eqref{feedback-de}. After removing $S$ and $R$, this gives the \emph{reduced one delay SEIR behavior-perception system}:
    \begin{equation}
    \label{seir-onedelay-reduced}
    \left\{ \; \;
    \begin{split}
    \frac{dE}{dt} & = \beta\left(1 - \omega \right) I - \lambda E \\
    \frac{dI}{dt} & = \lambda E - \gamma I\\
    \end{split}  \hspace{0.5in}
    \begin{split}
    \frac{d\omega}{dt} & = k_{\omega} \left( \frac{(\lambda E)^q}{M^q + (\lambda E)^q}\left( 1 - \omega^* P_{\omega}\right) - \omega \right) \\
    \frac{dP_{\omega}}{dt} & = k_{P_{\omega}} \left( \omega - P_{\omega} \right).
    \end{split}
    \right.
\end{equation}
This model approximates \eqref{feedback-de} when the outbreak remains small ($S \approx N$) and social perception of the disease evolves on a significantly faster time-scale than social behavior change or frustration due to social distancing. This is appropriate in societies where information is readily available due to media and other vectors of mass communication, but will or capability to change social behavior is limited.

\paragraph{Model III: Two delays.} We remove $S$ and $R$ but do not assume any social perception or behavior variables operate on a significantly faster time scale than the disease spread variables. This gives the \emph{reduced two delay SEIR behavior-perception system}:
\begin{equation}
    \label{reduced-de}
    \left\{ \; \; \;
    \begin{split}
    \frac{dE}{dt} & = \beta(1-\omega)I - \lambda E \\
    \frac{dI}{dt} & = \lambda E - \gamma I \\
    \frac{d\omega}{dt} & = k_{\omega} \left( P_I(1 - \omega^* P_\omega) - \omega \right)
\end{split} \hspace{0.5in}
    \begin{split}
    \frac{dP_I}{dt} & = k_{P_I} \left( \frac{(\lambda E)^q}{M^q + (\lambda E)^q} - P_I \right) \\
    \frac{dP_\omega}{dt} & = k_{P_\omega} \left( \omega - P_\omega \right)
    \end{split}
    \right.
\end{equation}
This model is the closest approximation of \eqref{feedback-de} since it only assumes that changes in $S$ occur on a slower time scale than the remainder of the variables. The system \eqref{reduced-de} is appropriate for studying the early stages of an outbreak, diseases which are not easily transmitted, or diseases which are significant controlled through social intervention and never enter a mode of true outbreak.



\subsection{Controlled and Uncontrolled Outbreaks}
\label{sec:controlled}

\begin{figure}[t!]
\begin{subfigure}{.32\textwidth}
\centering
\includegraphics[width=2in]{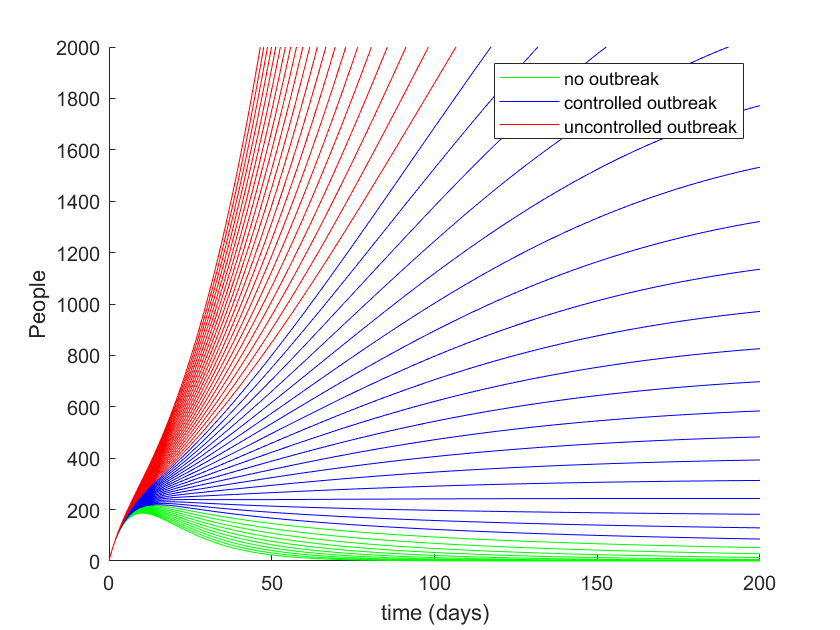}
\includegraphics[width=2in]{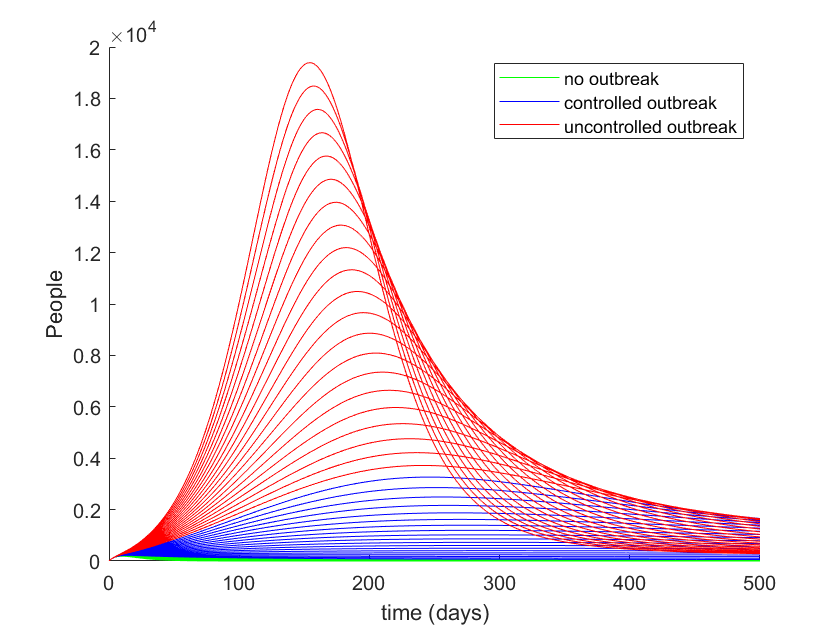}
\caption{$q = 1$}
\end{subfigure}
\begin{subfigure}{.32\textwidth}
\centering
\includegraphics[width=2in]{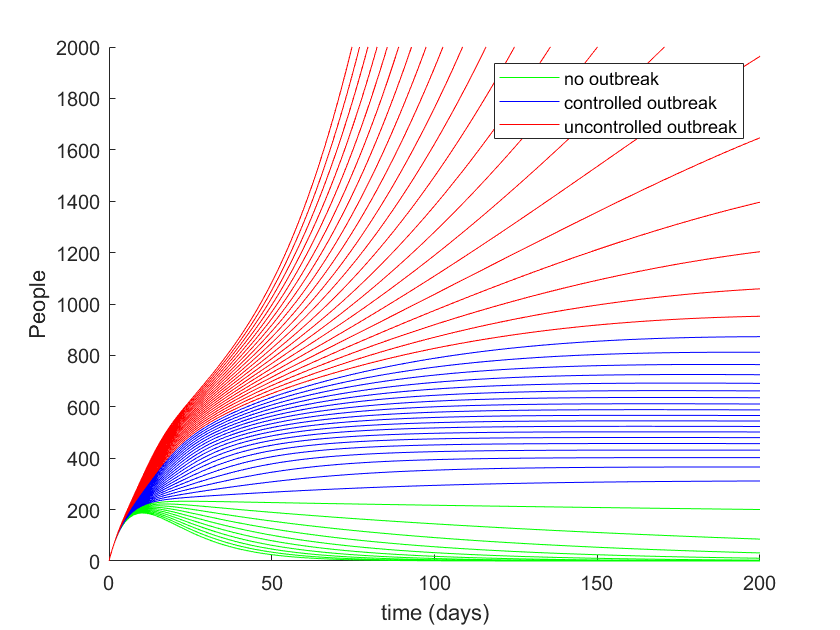}
\includegraphics[width=2in]{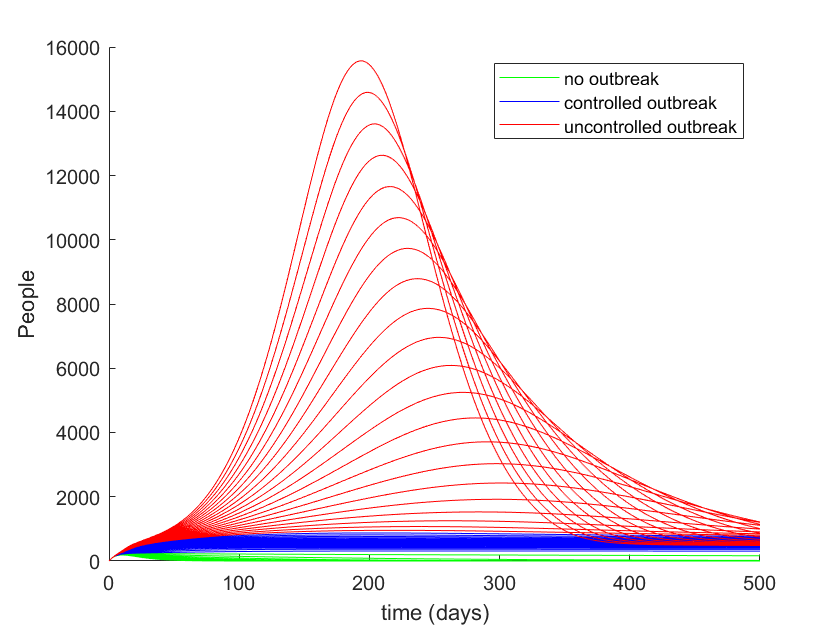}
\caption{$q = 5$}
\end{subfigure}
\begin{subfigure}{.32\textwidth}
\centering
\includegraphics[width=2in]{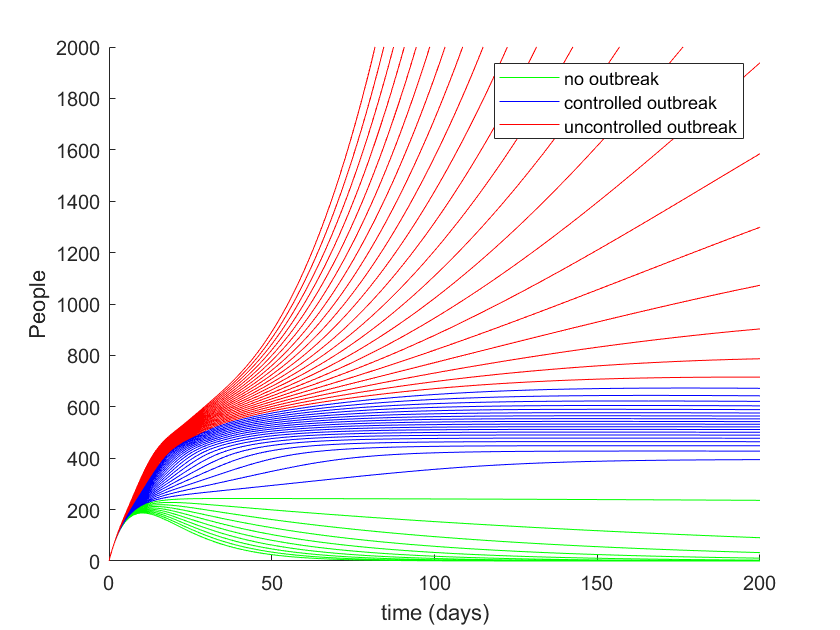}
\includegraphics[width=2in]{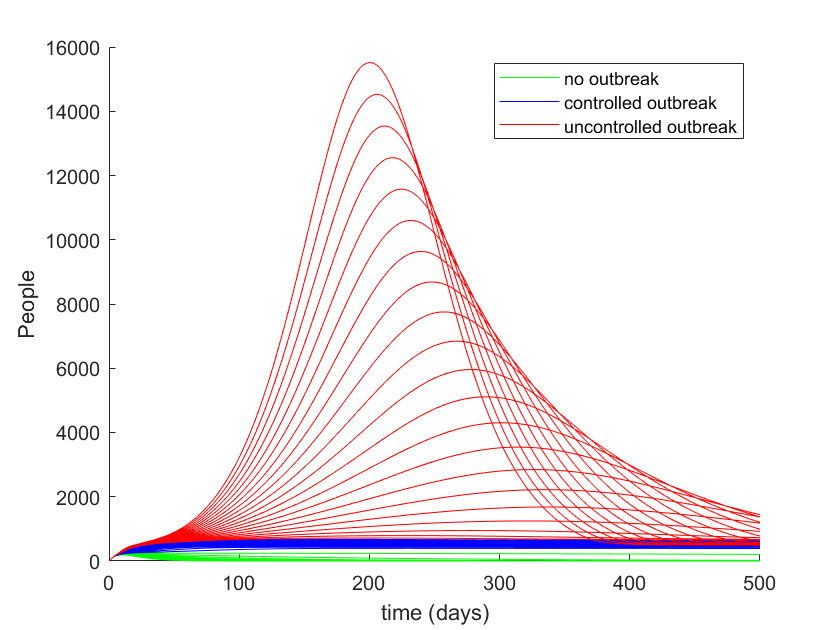}
\caption{$q = 10$}
\end{subfigure}
\caption{\footnotesize Simulations of the active infection level of the SEIR behavior-perception system \eqref{feedback-de} with parameter values $\lambda = 0.1$, $\gamma = 0.1$,  $M = 50$, $k_{\omega} = 0.5$, $k_{P_I} = 0.5$, $k_{P_{\omega}} = 0.05$, $\omega^* = 0.5$, and $N = 1000000$, and initial conditions $S(0) = 999500$, $E(0) = 500$, $I(0) = 0$, $R(0) = 0$, $\omega(0) = 0$, $P_I(0) = 0$, and $P_{\omega}(0) = 0$. Simulations were conducted for $\beta$ values from $0.01$ to $0.5$ taken in increments of $0.01$ taking three different values of $q$: (a) $ q= 1$; (b) $q = 5$; and (c) $q = 10$. The results are divided into three classes according to Table \ref{table3}: (i) no outbreak, $0 < \beta < 0.1$, (green); (ii) controlled outbreak, $0.1 \leq \beta < 0.3$ (blue); and (iii) uncontrolled outbreak, $0.3 \leq \beta \leq 0.5$ (red). Notice that the sharpness parameter $q$ does not affect whether there is controlled or uncontrolled outbreak but does control the distinction between trajectories in the respective regions.}
\label{figure1}
\end{figure}

The primary epidemiological parameter for determining whether an outbreak will occur is the \emph{basic reproductive number} $\mathcal{R}_0$. This value corresponds to the expected number of secondary infections resulting from a single primary infection in a fully susceptible population (i.e. early in an outbreak when $S \approx N$). Consequently, when $\mathcal{R}_0 < 1$ it is predicted that the disease will die off before an outbreak occurs, while if $\mathcal{R}_0 > 1$ it is predicted that the disease will spread. 

To determine the basic reproductive number, we consider the \emph{disease-free steady state} \\ $(\bar{E},\bar{I},\bar{\omega},\bar{P_I},\bar{P_{\omega}})_{df} = (0,0,0,0,0)$ of the reduced models \eqref{reduced-direct}, \eqref{seir-onedelay-reduced}, and \eqref{reduced-de}, and apply the next-generation method \cite{Diekmann1990,VANDENDRIESSCHE2002,Heffernan2005}. We will use this single steady state for all three reduced models by restricting to the relevant model variables as required. For all three models, it can be computed that the dominant eigenvalue of the next-generation matrix at the disease-free state is $\frac{\beta}{\gamma}$ so that $\mathcal{R}_0 = \frac{\beta}{\gamma}$. It follows that the disease will die off if $\beta < \gamma$ and we will have an outbreak if $\beta > \gamma$.

\begin{table}[t!]
\footnotesize
\centering
\begin{tabular}{c|c|c|c|c|c}
    \hline \hline
    Model & Behavior & \begin{tabular}{c} Parameter \\ range \end{tabular} & \begin{tabular}{c} Disease-free \\ steady state \end{tabular} & \begin{tabular}{c} Endemic \\ steady state \end{tabular} & Oscillations\\
    \hline \hline
    \multirow{3}{*}{\begin{tabular}{c}no delays\\\eqref{reduced-direct}\end{tabular}} & No outbreak & $\beta < \gamma$ & stable & DNE & \multirow{3}{*}{none}\\
    & Controlled outbreak & $0 < \beta - \gamma < \frac{\gamma}{\omega^*}$ & unstable & stable & \\
    & Uncontrolled outbreak & $\beta - \gamma > \frac{\gamma}{\omega^*}$ & unstable & DNE & \\
    \hline \hline
    \multirow{3}{*}{\begin{tabular}{c}one delay \\ \eqref{seir-onedelay-reduced} \end{tabular}} & No outbreak & $\beta < \gamma$ & stable & DNE & \multirow{3}{*}{transient}\\
    & Controlled outbreak & $0 < \beta - \gamma < \frac{\gamma}{\omega^*}$ & unstable & stable & \\
    & Uncontrolled outbreak & $\beta - \gamma > \frac{\gamma}{\omega^*}$ & unstable & DNE & \\
    \hline \hline
    \multirow{3}{*}{\begin{tabular}{c}two delays\\\eqref{reduced-de}\end{tabular}} & No outbreak & $\beta < \gamma$ & stable & DNE & \multirow{3}{*}{sustained}\\
    & Controlled outbreak & $0 < \beta - \gamma < \frac{\gamma}{\omega^*}$ & unstable & varies$^*$ & \\
    & Uncontrolled outbreak & $\beta - \gamma > \frac{\gamma}{\omega^*}$ & unstable & DNE & \\
    \hline \hline
\end{tabular}
\caption{\footnotesize Summary of the analysis of the disease-free and endemic steady states of the reduced SEIR behavior-perception systems \eqref{reduced-direct}, \eqref{seir-onedelay-reduced}, and \eqref{reduced-de}. The analysis predicts three distinct modes of epidemic behavior: (a) \emph{no outbreak} if $\beta < \gamma$; (b) \emph{controlled outbreak} if $0 < \beta - \gamma < \frac{\gamma}{\omega^*}$; and (c) \emph{uncontrolled outbreak} if $\beta - \gamma > \frac{\gamma}{\omega^*}$.  The distinction between these three cases on system \eqref{feedback-de} is illustrated in Figure \ref{figure1}. The analysis also suggests that delays in social perceptions feeding back into social behavior are required in order to have secondary waves of infection. In particular, for the two delay model, the endemic steady state may lose stability even when the outbreak is controlled, which yields limit cycles. These sustained waves of infection are demonstrated in Figure \ref{figure2}. The mathematical analysis and numerically derived boundaries of stability are contained in Appendix \ref{app:stability}.}
    \label{table3}
\end{table}

The reduced models \eqref{reduced-direct}, \eqref{seir-onedelay-reduced}, and \eqref{reduced-de} also have an \emph{endemic steady state}, which is given by
\begin{equation}
    \label{endemic-ss} \small
(\bar{E},\bar{I},\bar{\omega},\bar{P_I},\bar{P_{\omega}})_{end} = \left( \frac{M}{\lambda} \left( \frac{\beta-\gamma}{\gamma-\omega^*(\beta - \gamma)} \right)^{\frac{1}{q}}, \frac{M}{\gamma} \left( \frac{\beta-\gamma}{\gamma - \omega^*(\beta - \gamma)} \right)^{\frac{1}{q}}, \frac{\beta - \gamma}{\beta}, \frac{\beta - \gamma}{\beta- \omega^*(\beta - \gamma)} , \frac{\beta-\gamma}{\beta}\right).
\end{equation}
Notice that the endemic steady state \eqref{endemic-ss} is only physically meaningful for any of the three reduced models when $0 < \beta - \gamma < \frac{\gamma}{\omega^*}$. Again, we can consider \eqref{endemic-ss} to be the steady state for all three reduced models by restricting to the appropriate model variables.

The endemic steady state \eqref{endemic-ss} corresponds to a controlled outbreak where the the fear of infection keeps the spread of the disease from growing uncontrolled (see Figure \ref{figure1}). 
Notice, however, that if $\omega^*$ is sufficiently high, the endemic steady state is not physically relevant and the system exhibits a full outbreak similar to as though no social interventions had taken place. This suggests that frustration with social distancing can undo the gains in mitigating disease spread given by fear of infection and the resultant social distancing.

To further understand the behavior of the reduced models \eqref{reduced-direct}, \eqref{seir-onedelay-reduced}, and \eqref{reduced-de}, and therefore the full SEIR behavior-perception model \eqref{feedback-de}, we conduct linear stability analysis on the disease-free and endemic steady states of the reduced model (see Appendix \ref{app:stability} for mathematical details). A summary of the analysis is contained in Table \ref{table3}. Simulations of the full SEIR behavior-perception model \eqref{feedback-de} are contained in Figure \ref{figure1} for various values of the transmission rate $\beta$ and sharpness parameter $q$.

These analyses suggest three distinct possibilities for the dynamics:
\begin{enumerate}
    \item[(i)] \textbf{No outbreak} when $\beta < \gamma$ (green in Figure \ref{figure1}). In this case, the overall level of infection $E(t) + I(t)$ tends monotonically to zero. This corresponds to a disease which is not virulent enough to maintain presence in a population.
    \item[(ii)] \textbf{Controlled outbreak} when $0 < \beta - \gamma < \frac{\gamma}{\omega^*}$ (blue in Figure \ref{figure1}). In this case, the infection nears the endemic steady state \eqref{endemic-ss}. This corresponds to a population which, through social perception and behavior feedback, is able to tolerate a certain amount of ambient infection without ever fully ridding it from the population.
    \item[(iii)] \textbf{Uncontrolled outbreak} when $\beta - \gamma > \frac{\gamma}{\omega^*}$ (red in Figure \ref{figure1}). In this case, the infection runs through the population relatively unimpeded and forms a characteristic infection peak. This corresponds to a population for which the frustration of social distancing is simply too great to allow the level of social behavior change required to mitigate the disease's spread.
\end{enumerate}

Note that the parameter bounds for these three cases can be stated in terms of the basic reproductive number $\mathcal{R}_0 = \frac{\beta}{\gamma}$. Defining $\mathcal{R}_{crit} = \frac{1 + \omega^*}{\omega^*}$, we have that the system \eqref{feedback-de} exhibits no outbreak if $0 < \mathcal{R}_0 < 1$, a controlled outbreak if $1 < \mathcal{R}_0 < \mathcal{R}_{crit}$, and an uncontrolled outbreak if $\mathcal{R}_{crit} < \mathcal{R}_0$. Intuitively, diseases with high reproductive numbers require less frustration with social distancing in order to revert to full outbreaks. In contrast, outbreaks of diseases with low basic reproductive numbers and high levels of frustration can still be controlled since the disease does not spread as fast. For diseases with high reproductive numbers, it is crucial that mitigation strategies are implemented in such a way that frustrations with disease mitigation strategies do not feedback to cause a decrease in social distancing.

\subsection{Secondary Waves of Infection}
\label{sec:waves}

\begin{figure}[t!]
\begin{subfigure}{.32\textwidth}
\centering
\includegraphics[width=2in]{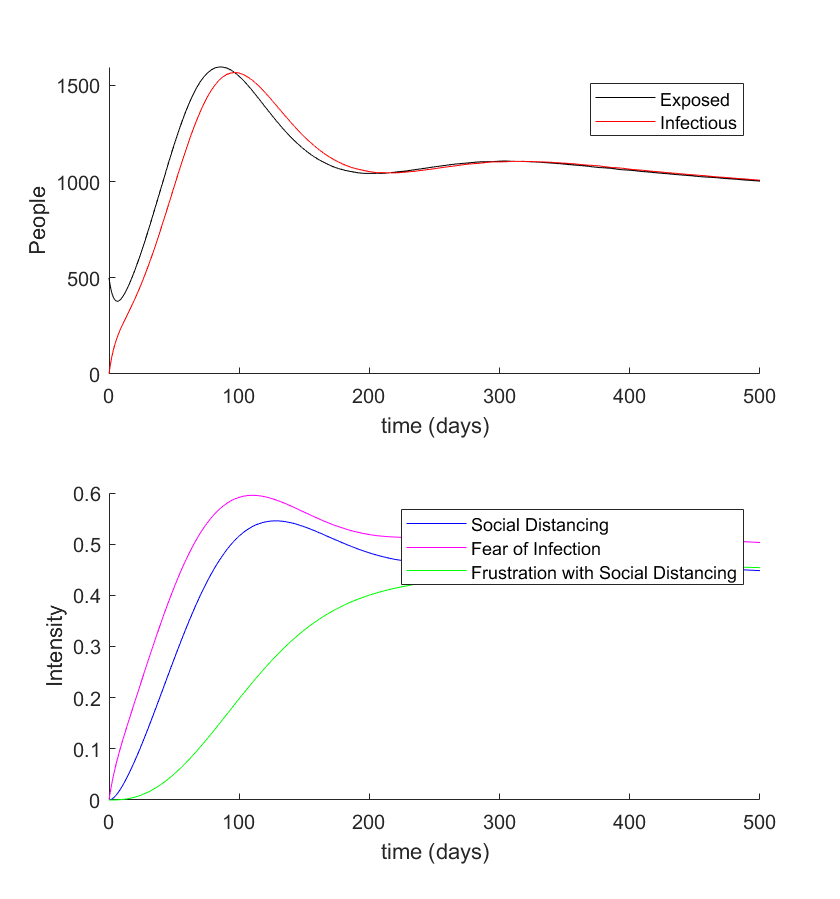}
\caption{$q = 1$ (No oscillations)}
\end{subfigure}
\begin{subfigure}{.32\textwidth}
\centering
\includegraphics[width=2in]{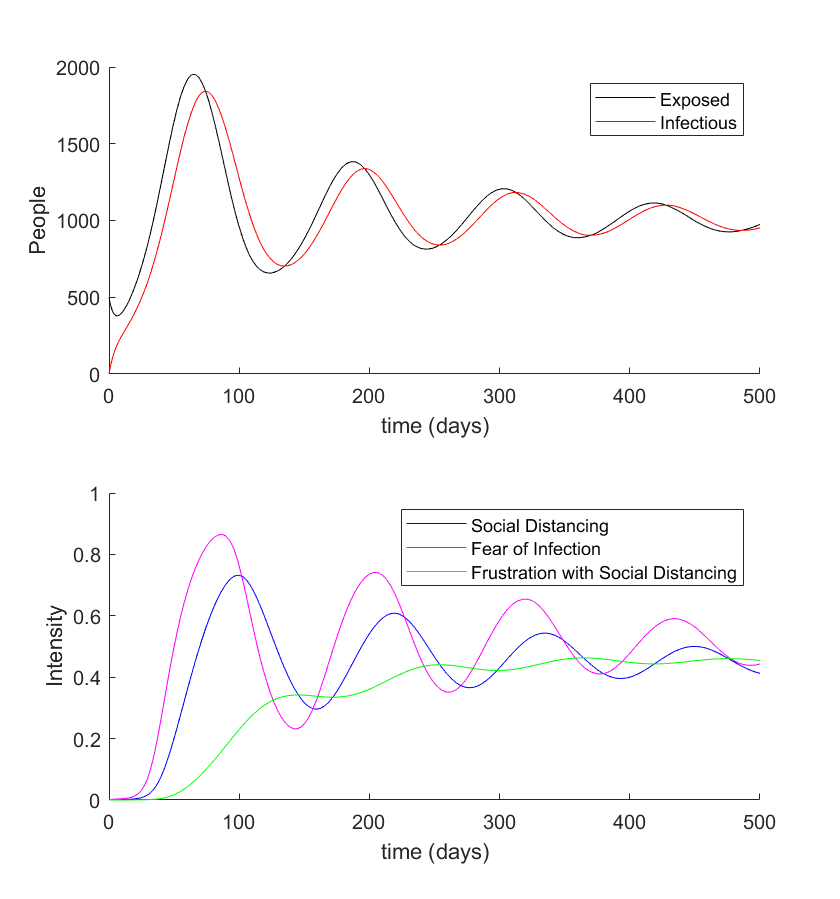}
\caption{$q = 5$ (transient oscillations)}
\end{subfigure}
\begin{subfigure}{.32\textwidth}
\centering
\includegraphics[width=2in]{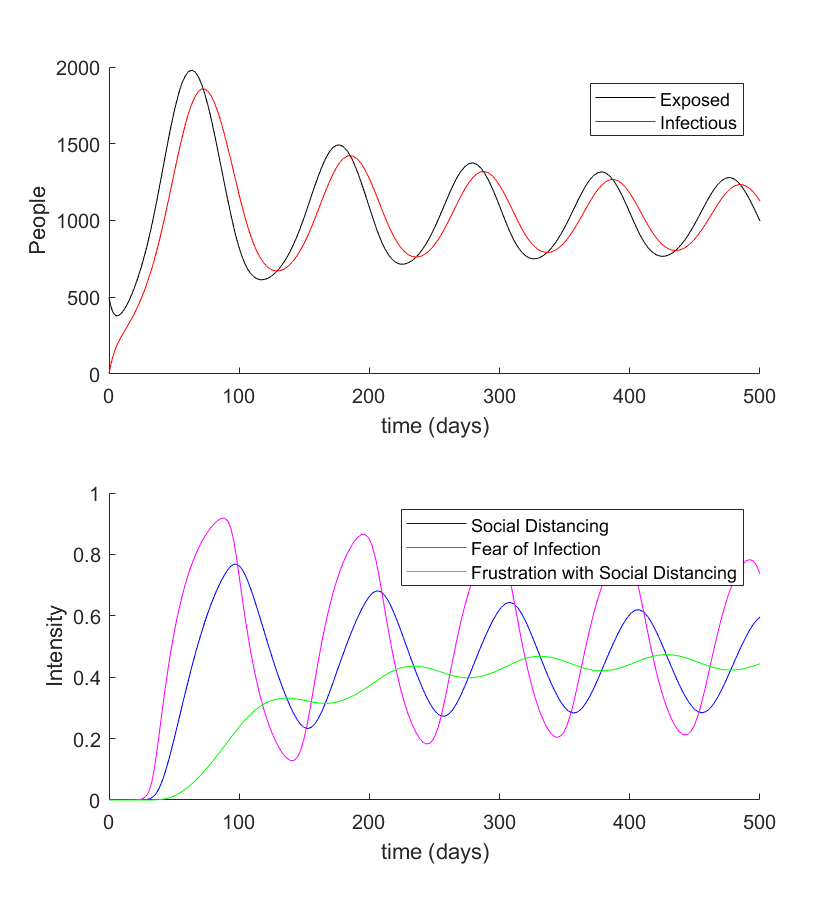}
\caption{$q = 10$ (sustained oscillations)}
\end{subfigure}
\caption{\footnotesize Simulations of the SEIR behavior-perception system \eqref{feedback-de} with the parameter values $\beta = 0.2$, $\lambda = 0.1$, $\gamma = 0.1$, $M = 100$, $k_{\omega} = 0.05$, $k_{P_I} = 0.05$, $k_{P_{\omega}} = 0.01$, $\omega^* = 0.25$, and $N = 1000000$ and initial condition $S(0) = 999500$, $E(0) = 500$, $I(0) = 0$, $R(0) = 0$, $\omega(0) = 0$, $P_I(0) = 0$, and $P_\omega(0) = 0$. Simulations are run for three cases on the parameter $q$: (a) $q = 1$; (b) $q = 5$; and (c) $q= 10$. We can see that, as $q$ becomes larger, the dynamics changes from quasi-stability (Figure (a)), to transient oscillations (Figure (b)), to sustained oscillations (Figure (c)). This suggests that the sharpness in the feedback from the infection level to social distancing behavior plays a key role in creating secondary waves of infection.}
\label{figure2}
\end{figure}

Significant discussion and public policy planning has centered around the possibility of secondary waves of infection. It is widely feared that after initial success in mitigating the spread of COVID-19, social perceptions and behaviors will change in response to falling infection numbers and that this will lead to a second wave of infection, and indeed this is suspected to be occurring in several regions around the world. Given the costs associated with social restrictions, predicting and assessing the timing and scale of potential second waves of infection is an area of significant research and concern.

To analyze the possibility of secondary waves of infection in the SEIR behavior-perception system \eqref{feedback-de}, we further analyze the stability of the endemic steady state \eqref{endemic-ss}. The mathematical analysis is contained in Appendix \ref{app:stability} and the results are summarized in Table \ref{table3}. We see that no oscillations are possible for the reduced no delay system \eqref{reduced-direct}, transient but not sustained oscillations (i.e. limit cycles) are possible in certain parameter regions for the one delay system \eqref{seir-onedelay-reduced}, and both transient and sustained oscillations are possible in certain parameter regions for the two-delay system \eqref{reduced-de}. This analysis suggests that delays in the social feedback and behavior variables play a key role in the generating the capacity for secondary waves of infection. The distinctions between no oscillations, transient oscillations, and sustained oscillations for the full SEIR behavior-perception model \eqref{feedback-de} are illustrated in Figure \ref{figure2}.  

\section{Results}
\label{sec:data}

In this section, we fit the SEIR behavior-perception system \eqref{feedback-de} to COVID-19 case incidence and mortality data. Data is taken from the database hosted by the Center for
Systems Science and Engineering
(CSSE) at Johns Hopkins University \cite{dong2020}. Model parameters and initial conditions for $\omega$, $P_I$ and $P_\omega$ were estimated through nonlinear least-squares curve fitting to the cumulative reported case and mortality data and their corresponding rates of change. We used the built-in MATLAB functions fmincon and multistart to minimize the following objective function:

$$Obj(\theta)=\sum_{i=1}^T \left(c(\theta,t_i)-\hat{c}_i)^2+(d(\theta,t_i)-\hat{d}_i)^2\right)+\sum_{i=T-10}^T \left(r_c(\theta,t_i)-\hat{r}_{ci})^2+(r_d(\theta,t_i)-\hat{r}_{di})^2\right),$$where $c(\theta,t_i)$ and $d(\theta,t_i)$ are the estimated cumulative cases and death cases from the model with parameter set $\theta$ at the calendar date $t_i$, respectively. $\hat{c}_i$ and $\hat{d}_i$ correspond to the actual reported cumulative cases and death cases at calendar date $t_i$. The second term consists of the rates of change of cumulative cases and death cases. That is, $c(\theta,t_i)$ and $d(\theta,t_i)$ are the rates of change of cumulative cases and death cases and $\hat{r}_{ci}$ and $\hat{r}_{di}$ are the actual rates of change from the data. $T$ is the total number of data points. We chose to include fitting to the number of deaths to promote constraining the model in the fitting process.

To test the model, we have selected six regions which have experienced different profiles in how the epidemic has spread: Canada, Italy, the United States, Israel, Michigan, and California. In Canada and Italy, after a substantial initial outbreak, the outbreak has been largely contained. In the United States, there was a large outbreak in March and April, a moderate dip in May and early June, and then a larger outbreak in July. In Israel, there was a small outbreak in March and April, a period of near eradication in May and June, and then a large outbreak in July. In Michigan, there was an initial outbreak that was largely contained, until early July where new cases of infection started to increase. California, much like the United States experienced a initial outbreak followed by a moderate dip in the month of April, and then entered the beginning of a much larger outbreak. The results are contained in Figures \ref{figure4} and \ref{figure5} and parameter values are contained in Table \ref{table4}. 

Simulated model fits indicate that the SEIR behavior-perception model supports different epidemic profiles as fear of infection, social distancing and frustration with social distancing are allowed to dynamically change between 0 and 1. These profiles include controlled disease incidence with near eradication (Canada in Figure~\ref{figure4}), loss of disease control and management (United States in Figure~\ref{figure4}), and disease resurgence in the form of a secondary wave (Israel in Figure~\ref{figure4}). 

\begin{figure}[t]
    \centering
    \begin{subfigure}{.32\textwidth}
\centering
\includegraphics[width=1.9in]{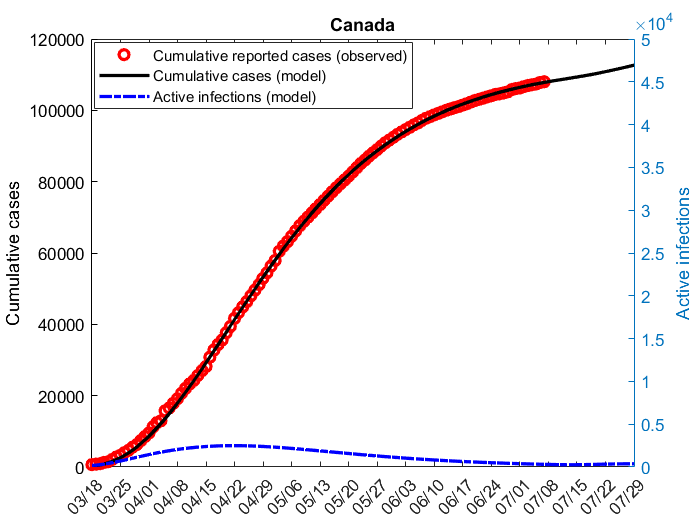}
\includegraphics[width=1.9in]{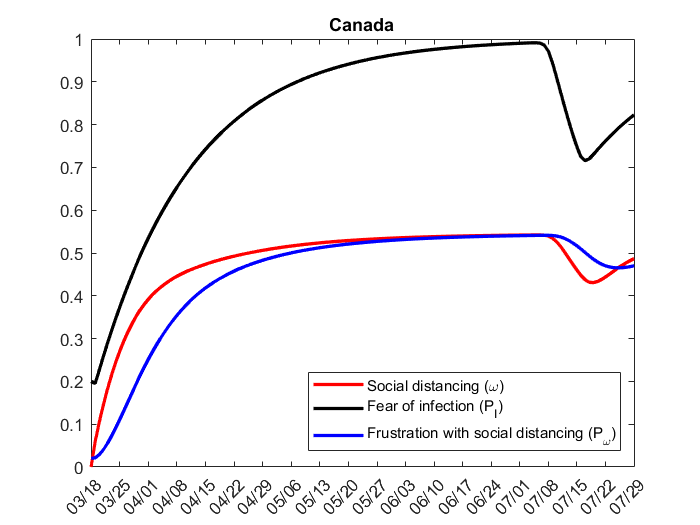}
\caption{Canada}
\end{subfigure}
    \begin{subfigure}{.32\textwidth}
\centering
\includegraphics[width=1.9in]{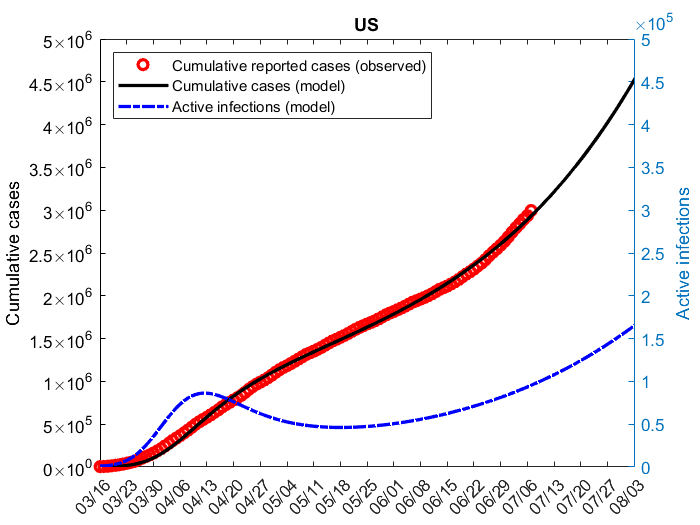}
\includegraphics[width=1.9in]{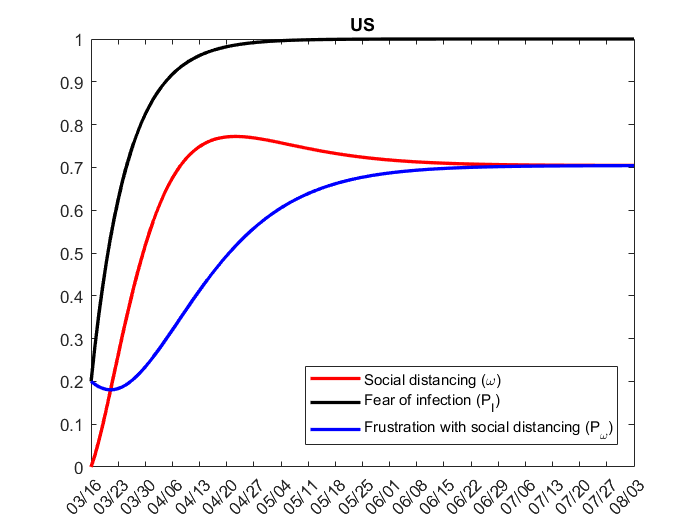}
\caption{United States}
\end{subfigure}
    \begin{subfigure}{.32\textwidth}
\centering
\includegraphics[width=1.9in]{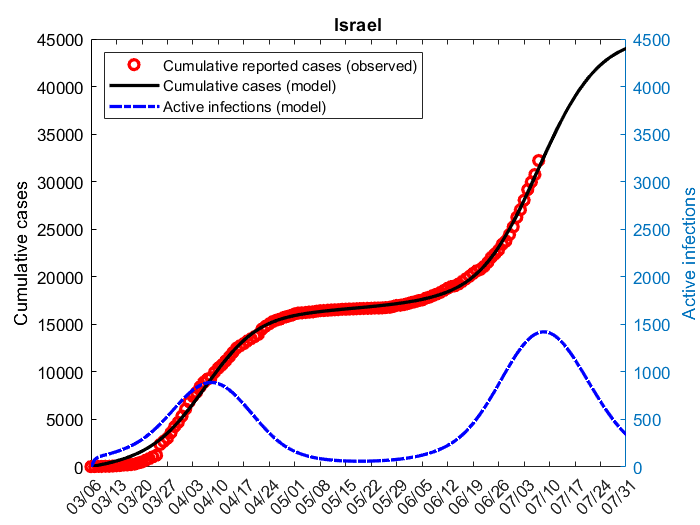}
\includegraphics[width=1.9in]{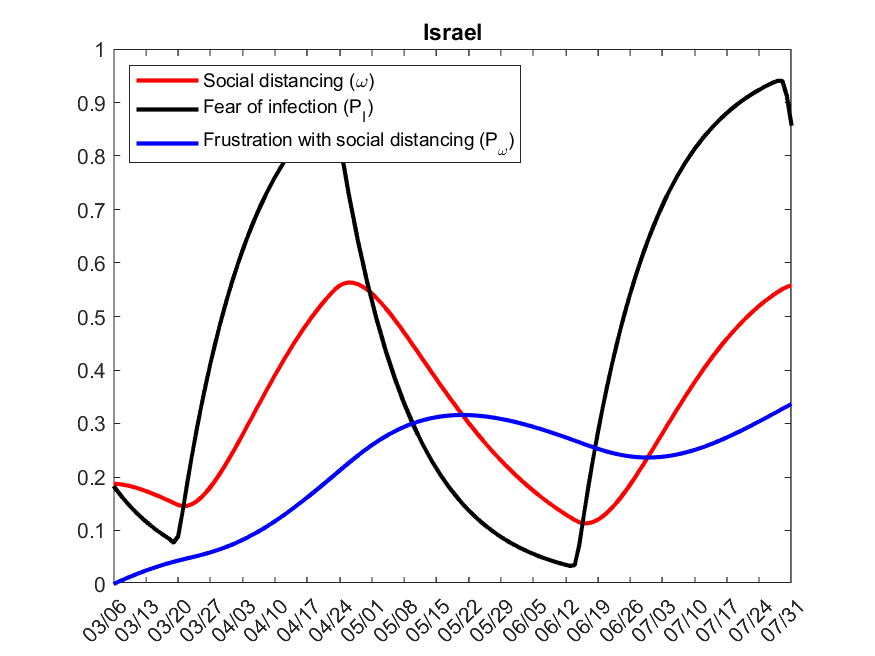}
\caption{Israel}
\end{subfigure}
    \caption{\footnotesize Model fits of the SEIR behavior-perception system model \eqref{feedback-de} to data sets of varying dynamics: (a) Canada, (b) United States and (c) Israel. Fitted Model parameters are shown in Table~\ref{table4}. Top row: cumulative cases and active cases (scaled according to the left and right axis, respectively). Bottom row: social distancing, perception of fear of infection and frustration with social distancing ($\omega$, $P_{I}$ and $P_{\omega}$). The basic reproductive numbers for Canada, US and Israel are 1.96, 3.78 and 1.53, respectively.   
    }
    \label{figure4}
\end{figure}


\begin{figure}[t!]
    \centering
    \begin{subfigure}{.32\textwidth}
\centering
\includegraphics[width=1.9in]{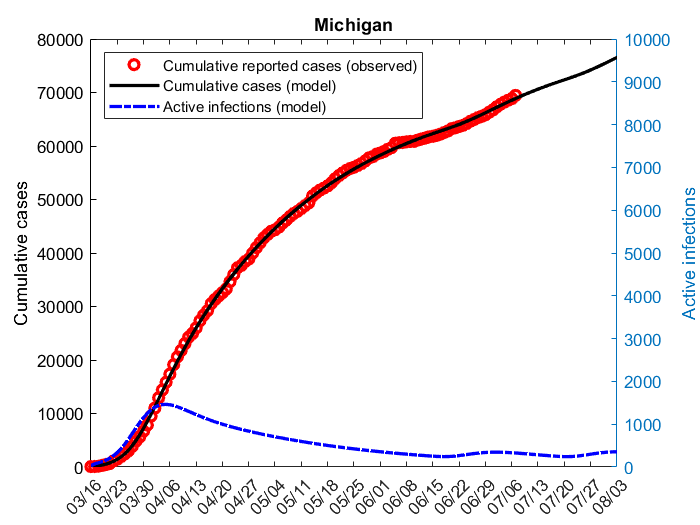}
\includegraphics[width=1.9in]{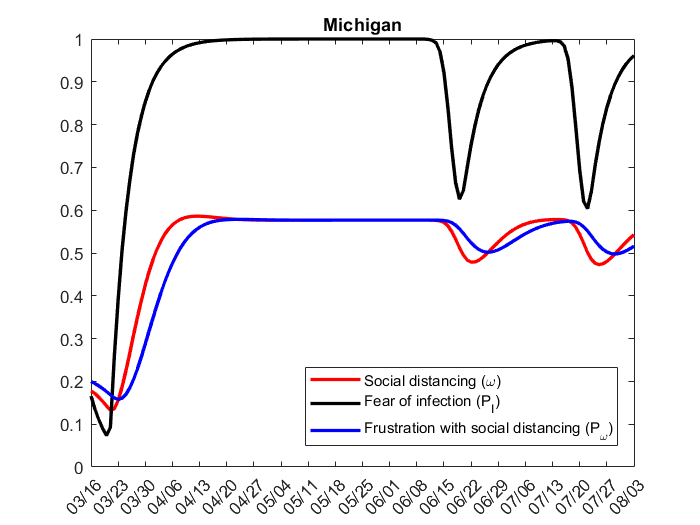}
\caption{Michigan}
\end{subfigure}
    \begin{subfigure}{.32\textwidth}
\centering
\includegraphics[width=1.9in]{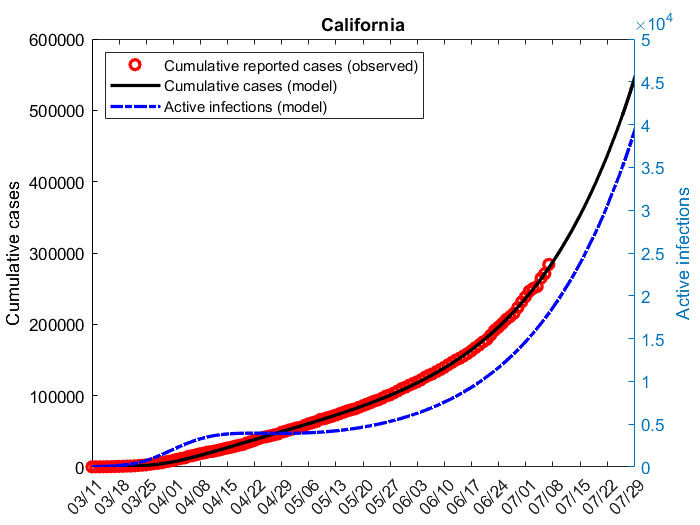}
\includegraphics[width=1.9in]{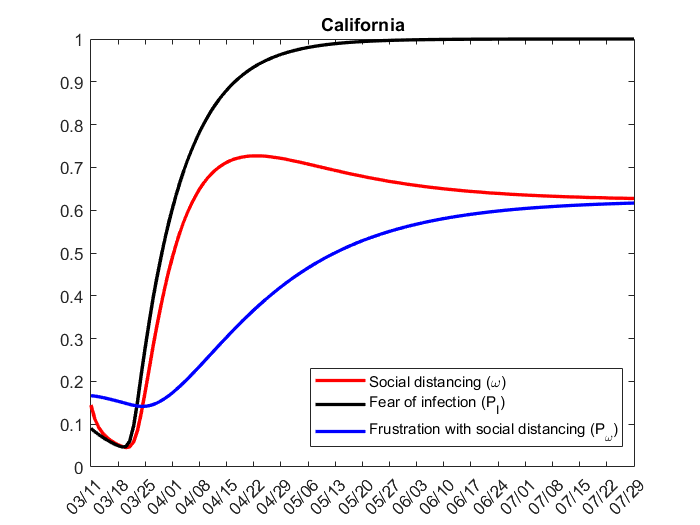}
\caption{California}
\end{subfigure}
    \begin{subfigure}{.32\textwidth}
\centering
\includegraphics[width=1.9in]{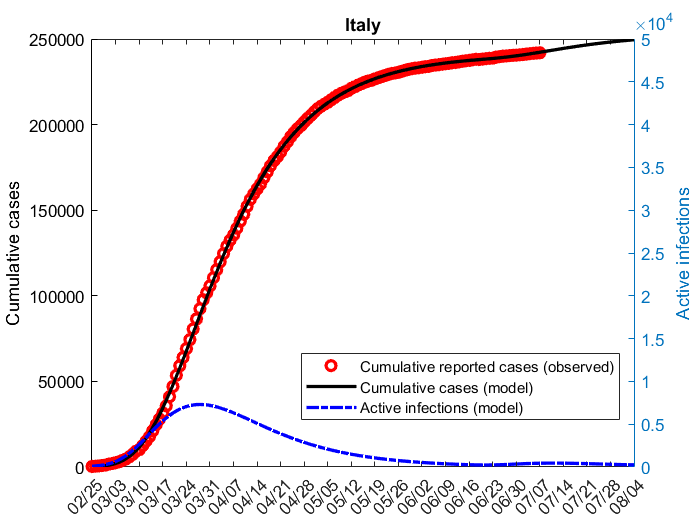}
\includegraphics[width=1.9in]{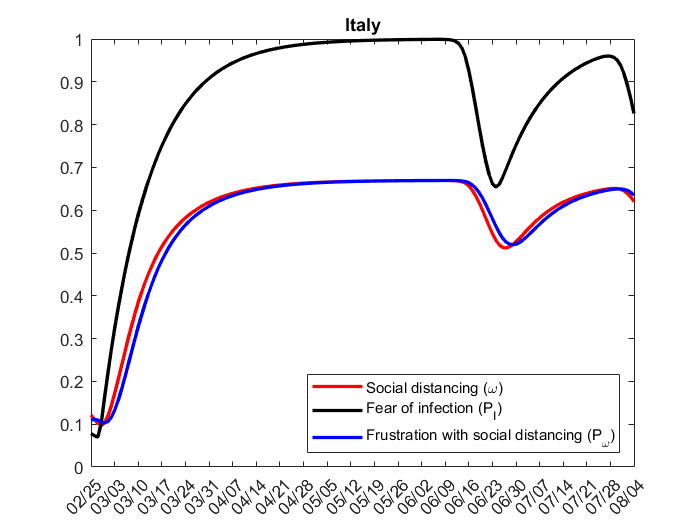}
\caption{Italy}
\end{subfigure}
    \caption{\footnotesize Model fits of the SEIR behavior-perception system model \eqref{feedback-de} to data sets of varying dynamics: (a) Michigan, (b) California and (c) Italy. Fitted Model parameters are shown in Table~\ref{table4}. Top row: cumulative cases and active cases (scaled according to the left and right axis, respectively). Bottom row: social distancing, perception of fear of infection and frustration with social distancing ($\omega$, $P_{I}$ and $P_{\omega}$). The basic reproductive numbers for Michigan, California and Italy are 2.2, 3.62 and 2.51, respectively. }
    \label{figure5}
\end{figure}

\begin{table}[t!]
\centering
\begin{tabular}{c|c|c|c|c|c|c}
\hline \hline
Parameter & Canada & United States & Israel & Michigan & California & Italy\\
\hline \hline
$\beta$     &   1.369    &    1.672     &   0.904 & 2.153& 1.470 & 1.788   \\
$\lambda$     &   0.701   &     0.370   & 0.6753 & 0.488 & 0.160& 0.456 \\
$\gamma$     & 0.697    &      0.442        & 0.589 & 0.978 & 0.405 & 0.711 \\
$k_{\omega}$     & 0.368 &   0.132     & 0.053 &00.136 & 0.651& 0.230 \\
$k_{P_{I}}$     &  0.042    & 0.108       & 0.064 & 0.202 & 0.085& 0.071 \\
$k_{P_{\omega}}$    &  0.127 &  0.037 & 0.021 & 0.220 & 0.023 & 0.499 \\
$q$     & 40.09 & 32.45 & 40.04 & 50 & 6.905& 17.02\\
$M$     & 198.0 & 73.97 & 195.9 & 240.2 & 233.1 & 180.7 \\ 
$\omega^*$     & 0.836 & 0.419 &0.954 & 0.733& 0.604 & 0.492 \\
\hline
$N$     & 37590000  & 328200000  & 9152000 & 99870000  &39510000 & 60360000 \\
$S(0)$     & 37589621 & 328198408 & 9151800 &  9986769 &39509766 & 60359707\\
$E(0)$     &200  & 200 & 200 &200 & 200 & 200 \\
$I(0)$     & 179 & 1392 & 0 & 31 & 34 & 93 \\
$R(0)$     & 0 & 0 & 0 & 0 & 0 & 0 \\
$\omega(0)$     & 0 & 0 & 0.187 & 0.177 &0.144 & 0.121 \\
$P_{I}(0)$     & 0.2 & 0.2 &0.183 & 0.165 & 0.089 &0.078\\
$P_{\omega}(0)$     & 0.019 & 0.2 &0 & 0.2 & 0.166& 0.109\\
\hline \hline
\end{tabular}
\caption{\footnotesize Parameter and initial values used to create model simulations in Figure~\ref{figure4} and \ref{figure5}.}
\label{table4}
\end{table}
Figure~\ref{figure4} shows model fits to reported case data for Canada, the United States and Israel. The model fit for Canadian reported case data shows an initial outbreak in early March which is ultimately controlled over the following four months. The model predicts stable levels of high fear of infection while social distancing and frustration with social distancing both tend to sufficient levels for disease control and near eradication. On the other hand, the United States shows an initial outbreak which is temporarily controlled until late May. Fear of infection quickly saturates to 1, while frustration with social distancing slowly increases over time. After the initial increase in social distancing, the increasing levels of frustration lead to a steady decrease in social distancing levels which become insufficient for control or eradication and thus active cases of infection increase. Interestingly, Israel shows an initial outbreak followed by a period of near eradication of the disease. However, fear of infection and social distancing decrease to insufficient levels too quickly after the first wave of infections and leads to a secondary outbreak.

Figure~\ref{figure5} shows model fits of the SEIR behavior-perception system for Michigan, California and Italy. Michigan shows an initial outbreak that is relatively under control. However, the model predicts a period starting after mid-July where social distancing, fear of infection and frustration with social distancing begin to oscillate, but not to the extend seen in the Israel model fit from Figure~\ref{figure4}. California experiences an initial outbreak that is only temporarily managed before increasing exponentially. After the initial outbreak, the rise in active cases is accompanied by high fear of infection and an increasing level of frustration with social distancing. This leads to a reduction in social distancing. Italy dynamically behaves similar to what is seen in Michigan and Canada. That is, after the initial outbreak, high fear of infection drives a period of stable social distancing and frustration with social distancing.
\section{Discussion and Conclusions}
\label{sec:conclusions}

We have introduced an SEIR behavior-perception system \eqref{feedback-de} for modeling the feedback of social fear of infection and frustration with social distancing on the dynamics of disease spread. We have shown the following: (1) including fear of infection leads to a \emph{controlled outbreak} the level of which depends on each society's tolerance for infection; (2) delays in the fear of infection can lead to \emph{secondary and sustained waves of infection}; and (3) frustration with social distancing can overcome the fear of infection to lead to an \emph{uncontrolled outbreak}. Where analytically possible, we have provided parameter ranges where the relevant behaviors occur.

We have also fit the model to cumulative COVID-19 case data from several regions: Canada, the United States, Israel, Michigan, California and Italy. (See Figures \ref{figure4} and \ref{figure5}.) The fits obtained validate our SEIR behavior-perception model as capable of capturing the emergence of social-feedback-driven secondary waves of infection. Our analysis furthermore suggests that regions which experience significant reductions in new infection levels following their initial outbreak, such as Israel and Canada, are likely to be able to mitigate secondary waves of infection. Regions which experiences only moderate reductions in new infection levels, such as the United States, are likely to experience more dramatic secondary waves and are at increase risk for entering into an uncontrolled outbreak.

Using the full model~\eqref{feedback-de}, we estimated the basic reproductive number for Canada, United States, Israel, Michigan, California and Italy, to be 1.96, 3.78, 1.53, 2.2, 3.62 and 2.51 respectively. These estimations are inline with other studies \cite{Zhuang2020,Lv2020}. In addition, the time varying effective reproductive number is now implicitly a function of the social behavior in the population, because of the incorporation of behavior-perception feedback variables, $\omega$, $P_I$ and $P_{\omega}$. This effective reproductive number may provide further real-time and future insight into not only the disease dynamics, but the social dynamics that drive the spread of the disease. 

The reduced model \eqref{reduced-direct} produces a threshold number, $\mathcal{R}_{crit}>1$ such that when $\mathcal{R}_0>\mathcal{R}_{crit}$ the solutions become unbounded. This unbounded behavior manifests itself within the full model \eqref{feedback-de} as the uncontrolled outbreak which forms a characteristic peak often seen in disease outbreaks. However, unlike in the reduced model, the full model has bounded solutions, because of the finite susceptible population. Thus herd immunity ultimately reduces the disease burden to zero and the disease dies out. These dynamics are illustrated in Figure~\ref{figure1}.

The model introduced in this paper presents several immediate opportunities for further work.
\begin{enumerate}
    \item 
    The model \eqref{feedback-de} assumes the same critical infection level $M$ for both the ascending and descending phases of disease dynamics. In practice, however, we have seen some countries quick to lockdown and slow to open up, while other countries are slow to lockdown and quick to open up, which suggests a different critical value $M$ in the ascending and descending phase of an outbreak.
    \item 
    Currently, our estimates for the social distancing variable $\omega$ only come through disease prevalence. Throughout the COVID-19 pandemic, however, data on social mobility has been provided by several sources, including Google, Apple, and the United States Department of Transportation~\cite{GoogleMobility,AppleMobility}. Further insight in the social mechanisms underlying disease dynamics might be gained by fitting $\omega$ to this data as well.
    \item
    Numerical simulations such as those in Figure \ref{figure2} suggest that the reduced SEIR behavior-perception model \eqref{reduced-de} undergoes a Hopf bifurcation as the endemic steady state \eqref{endemic-ss} loses stability. Numerical results also suggest that trajectories of the reduced models \eqref{seir-onedelay-reduced} and \eqref{reduced-de} are bounded in the controlled outbreak scenario but unbounded for the uncontrolled outbreak scenario. These results are currently unproved.
    %
\end{enumerate}



\bibliographystyle{plain}

\begin{thebibliography}{100}


\bibitem{worldometers}
\emph{Worldometers.info,
\newblock Covid-19 coronavirus pandemic},
\newblock {\tt https://www.worldometers.info/coronavirus/}. Accessed: 10
  Aug, 2020.
  
  

\bibitem{Kucharski2020}
A.J. Kucharski, T.W. Russell, C. Diamond, et al.,
\newblock Early dynamics of transmission and control of {COVID}-19: {A}
  mathematical modelling study,
\newblock {\em The Lancet Infectious Diseases}, \textbf{20} (2020), P553--558.

\bibitem{zhang2020}
S. Zhang, M.Y. Diao, W. Yuc, et al.,
\newblock Estimation of the reproductive number of novel coronavirus
  ({COVID}-19) and the probable outbreak size on the diamond princess cruise
  ship: {A} data-driven analysis,
\newblock {\em Int. J. Infect. Dis.}, \textbf{93} (2020), 201--204.
 
 \bibitem{Zhuang2020}
Z. Zhuang, S. Zhao, Q. Lin, et al.
\newblock Preliminary estimates of the reproduction number of the coronavirus disease (COVID-19) outbreak in Republic of Korea and Italy by 5 March 2020
\newblock {\em Int. J. Infect. Dis.}, \textbf{95} (2020), 308--310.

\bibitem{Lv2020}
M. Lv, X. Luo, J. Estill, et al.
\newblock Coronavirus disease (COVID-19): a scoping review,
\newblock {\em  Eurosurveillance  }, \textbf{25 (15)} (2020), 2000125.

\bibitem{Barbarossa2020}
M.V. Barbarossa, J.~Fuhrmann, J.~Heidecke, et al., 
\newblock A first study on the impact of current and future control measures on
  the spread of {COVID}-19 in {G}ermany,
\newblock (2020).
\newblock Available on med{R}xiv:
  https://www.medrxiv.org/content/10.1101/2020.04.08.20056630v1.

\bibitem{bouchnita2020}
A.~Bouchnita and A.~Jebrane,
\newblock A hybrid multi-scale model of {COVID}-19 transmission dynamics to
  assess the potential of non-pharmaceutical interventions, \emph{Chaos Soliton Fract.}, \textbf{138} (2020), 109941.

\bibitem{chang2020}
S.L. Chang, N.~Harding, C.~Zachreson, et al.,
\newblock Modelling transmission and control of the {COVID}-19 pandemic in
  {A}ustralia, (2020). Available on {A}r{X}iv: arXiv:2003.10218v2.

\bibitem{djidjou2020}
R.~Djidjou-Demassea, Y.~Michalakisa, M.~Choisya, et al.,
\newblock Optimal {COVID}-19 epidemic control until vaccine deployment, (2020).
\newblock Available on {M}ed{R}xiv:
  https://www.medrxiv.org/content/10.1101/2020.04.02.20049189v2. 

\bibitem{engbert2020}
R.~Engbert, M.M. Rabe, R.~Kliegl, et al.
\newblock Sequential data assimilation of the stochastic {SEIR} epidemic model  for regional {COVID}-19 dynamics, (2020).
\newblock Available on {M}ed{R}xiv: https://www.medrxiv.org/content/early/2020/06/30/2020.04.13.20063768.full.pdf.

\bibitem{Eikenberry2020}
S.E. Eikenberry, M. Mancuso, E. Iboi, et al.,
\newblock To mask or not to mask: {M}odeling the potential for face mask use by the general public to curtail the {COVID}-19 pandemic,
\newblock {\em Infect. Dis. Model.}, \textbf{5} (2020), 293--308.

\bibitem{Park2020}
S.W. Park, D.M. Cornforth, J. Dushoff, et al.,
\newblock The time scale of asymptomatic transmission affects estimates of
  epidemic potential in the {COVID}-19 outbreak, \emph{Epidemics}, \textbf{31} (2020), 100392.
  
\bibitem{Bai2020}
Y. Bai, L. Yao, T. Wei, et al.,
\newblock Presumed asymptomatic carrier transmission of {COVID}-19,
\newblock {\em JAMA}, \textbf{323} (2020), 1406–1407, 2020.

\bibitem{Roda2020}
W.C. Roda, M.B. Varughese, D.~Han, and M.Y. Li.
\newblock Why is it difficult to accurately predict the {COVID}-19 epidemic?
\newblock {\em Infect. Dis. Model.}, (2020).

\bibitem{Massonis2020}
G.~Massonis, J.R. Banga, and A.F. Villaverde,
\newblock Structural identifiability and observability of compartmental models
  of the {COVID}-19 pandemic, (2020). Available on ArXiv: arXiv:2006.14295.

\bibitem{Holmdahl2020}
I. Holmdahl and C. Buckee,
\newblock Wrong but useful - what {COVID}-19 epidemiologic models can and
  cannot tell us,
\newblock {\em N. Engl. J. Med.}, \textbf{383} (2020), 303-305.

\bibitem{Petropoulos2020}
F. Petropoulos and S. Makridakis,
\newblock Forecasting the novel coronavirus {COVID}-19,
\newblock {\em PLoS ONE}, \textbf{15} (2020), e0231236.

\bibitem{Wynants2020}
L. Wynants, B. Van~Calster, G.S. Collins,
\newblock Prediction models for diagnosis and prognosis of {COVID}-19: systematic
  review and critical appraisal,
\newblock {\em BMJ}, \textbf{369} (2020), m1328.

\bibitem{Wang2015}
Z. Wang, M.A. Andrews, Z.X. Wu, et al.,
\newblock Coupled disease-behavior dynamics on complex networks: A review,
\newblock {\em Phys. Life Rev.}, \textbf{15} (2015), 1—29.

\bibitem{Epstein2008}
J.M. Epstein, J. Parker, D. Cummings, et al.,
\newblock Coupled contagion dynamics of fear and disease: Mathematical and
  computational explorations,
\newblock {\em PLoS ONE}, \textbf{3} (2008), e3955.

\bibitem{Perra2011}
N. Perra, D. Balcan, B. Gonçalves, et al.,
\newblock Towards a characterization of behavior-disease models,
\newblock {\em PLOS ONE}, \textbf{6} (2011), 1--15.

\bibitem{Valle2013}
S.Y.~Del Valle, S.M. Mniszewski, and J.M. Hyman,
\newblock {\em Modeling the Impact of Behavior Changes on the Spread of
  Pandemic Influenza}, in \emph{Modeling the Interplay Between Human Behavior and the Spread of Infectious Diseases} (eds. P. Manfredi and A. D'Onofrio),
\newblock Springer New York, New York, NY, (2013), 59--77.

\bibitem{Cui2008}
J.A. Cui, X. Tao, and H. Zhu,
\newblock An {SIS} infection model incorporating media coverage,
\newblock {\em Rocky Mt. J. Math.}, \textbf{38} (2008), 1323--1334.

\bibitem{Fenichel2011}
E.P. Fenichel, C. Castillo-Chavez, M.G. Ceddia, et al.,
\newblock Adaptive human behavior in epidemiological models,
\newblock {\em Proc. Natl. Acad. Sci. U.S.A.},
  \textbf{108} (2011), 6306--6311.

\bibitem{Fenichel2013}
E.P. Fenichel and X. Wang,
\newblock The Mechanism and Phenomena of Adaptive Human Behavior During an
  Epidemic and the Role of Information, in \emph{Modeling the Interplay Between Human Behavior and the Spread of Infectious Diseases} (eds. P. Manfredi and A. D'Onofrio),
\newblock Springer New York, New York, NY, (2013), 153--168.

\bibitem{Poletti2013}
P. Poletti, B. Caprile, M. Ajelli, et al.,
\newblock Uncoordinated Human Responses During Epidemic Outbreaks, in \emph{Modeling the Interplay Between Human Behavior and the Spread of Infectious Diseases} (eds. P. Manfredi and A. D'Onofrio),
\newblock Springer New York, New York, NY, (2013), 79--91.

\bibitem{Sun2011}
C. Sun, W. Yang, J. Arino, et al.,
\newblock Effect of media-induced social distancing on disease transmission in
  a two patch setting,
\newblock {\em Math. Biosci.}, \textbf{230} (2011), 87--95.


\bibitem{HerreraValdez2011}
M.A. Herrera-Valdez, M. Cruz-Aponte, and C. Castillo-Chavez,
\newblock Multiple outbreaks for the same pandemic: Local transportation and
  social distancing explain the different ``waves'' of A-H1N1 pdm cases observed
  in {M}{\'e}xico during 2009,
\newblock {\em Math. Biosci. Eng.}, \textbf{8} (2011), 21.

\bibitem{Bauch2020}
S.A. Pedro, F.T. Ndjomatchoua, P. Jentsch, et al.,
\newblock Conditions for a second wave of {COVID}-19 due to interactions between
  disease dynamics and social processes, (2020).
\newblock Available on MedRxiv: https://www.medrxiv.org/content/10.1101/2020.05.22.20110502v1.

\bibitem{Kermack1927}
W.O. Kermack and A.G. McKendrick,
\newblock A contribution to the mathematical theory of epidemics,
\newblock {\em Proc. Roy. Soc. Lond. A}, \textbf{115} (1927), 700--721.

\bibitem{Roy2020}
S. Roy,
\newblock {COVID}‐19 reinfection: Myth or truth?, {\em S.N. Compr. Clin. Med.}, (2020), 1--4.

\bibitem{Diekmann1990}
O.~Diekmann and J.A.P. Heesterbeek, Mathematical Epidemiology of Infectious Diseases: Model Building, Analysis and Interpretation, in \emph{Wiley Series in Mathematical and Computational Biology}, Chichester, Wiley, (1990).

\bibitem{VANDENDRIESSCHE2002}
P.~{van den Driessche} and J. Watmough,
\newblock Reproduction numbers and sub-threshold endemic equilibria for
  compartmental models of disease transmission,
\newblock {\em Math. Biosci.}, \textbf{180} (2002), 29--48.

\bibitem{Heffernan2005}
J.M. Heffernan, R.J. Smith, and L.M. Wahl,
\newblock Perspectives on the basic reproductive ratio.
\newblock {\em J. R. Soc. Interface}, \textbf{2} (2005), 281‐293.

\bibitem{Hi}
A. Hill,
\newblock The possible effects of the aggregation of the molecules of
  haemoglobin on its dissociation curves,
\newblock {\em J. Physiol.}, \textbf{40} (1910), iv–vii.

\bibitem{dong2020}
Ensheng Dong, Hongru Du, and Lauren Gardner.
\newblock An interactive web-based dashboard to track {COVID}-19 in real time.
\newblock {\em The Lancet Infectious Diseases Correpondence}, \textbf{20} (2020), 533--534.

\bibitem{GoogleMobility}
\emph{Google,
\newblock COVID-19 Community Mobility Reports},
\\\newblock {\tt https://www.google.com/covid19/mobility/}. Accessed: 11
  Aug, 2020.
  
\bibitem{AppleMobility}
\emph{Apple,
\newblock Mobility Trends Reports
},
\newblock {\tt   https://www.apple.com/covid19/mobility
}. Accessed: 11
  Aug, 2020. 

\bibitem{Strogatz}
S.H. Strogatz,
\newblock {\em Nonlinear Dynamics and Chaos: with Applications to Physics,
  Biology, Chemistry, and Engineering}, 2nd edition, Westview Press, Boulder, CO, 2015.

\bibitem{Routh}
E.~J. Routh.
\newblock {\em A Treatise on the Stability of a Given State of Motion:
  Particularly Steady Motion},
\newblock Macmillan, 1877.


\bibitem{Hurwitz}
A.~Hurwitz,
\newblock Ueber die bedingungen, unter welchen eine gleichung nur wurzeln mit  negativen reellen theilen besitzt,
\newblock {\em Math. Ann.}, \textbf{46} (1895), 273–284.
  
  

\end{thebibliography}

\begin{appendices}

\section{Stability analysis of SEIR behavior-perception system}
\label{app:stability}

In this Appendix, we provide the mathematical stability analysis for the disease-free and endemic steady states of the reduced feedback systems \eqref{reduced-direct}, \eqref{seir-onedelay-reduced}, and \eqref{reduced-de}. We make use of the trace-determinant condition for planar systems \cite{Strogatz} and the Routh-Hurwitz criterion for non-planar systems \cite{Routh,Hurwitz}. The Routh-Hurwitz conditions states that a given polynomial has all roots with negative real part if and only if the first column entries of the Routh-Hurwitz table are all positive. Consequently, if the Routh-Hurwitz table of the characteristic polynomial of a matrix has any first column entry which is negative, then the matrix has an eigenvalue with positive real part.

\subsection{No delays}
\label{app:nodelay}

Assuming that $\omega$, $P_I$, and $P_{\omega}$ (if considered) equilibriate immediately, \eqref{reduced-de} reduces to the following direct systems:
\begin{equation} \small
    \label{seir-direct-reduced}
    \begin{array}{cc}  \mbox{\textbf{\underline{No frustration}}} & \mbox{\textbf{\underline{With frustration}}} \\ \\
    \left\{ \; \;
    \begin{split}
\frac{dE}{dt} & = \beta\left(1 - \frac{(\lambda E)^q}{M^q + (\lambda E)^q}\right) I - \lambda E \\
    \frac{dI}{dt} & = \lambda E - \gamma I.
    \end{split} \right.
 &
 \left\{ \; \;
    \begin{split} 
    \frac{dE}{dt} & = \beta\left(1 - \frac{(\lambda E)^q}{M^q + (1 + \omega^*)(\lambda E)^q}\right) I - \lambda E \\
    \frac{dI}{dt} & = \lambda E - \gamma I.
    \end{split} \right.
    \end{array}
\end{equation}
In addition to the disease-free steady state $(\bar{E},\bar{I})_{df} = (0,0)$, the system \eqref{seir-direct-reduced} has the endemic steady state
\begin{equation}
\label{endemic-ss1}
    (\bar{E},\bar{I})_{end}  =  \left( \frac{M}{\lambda} \left( \frac{\beta-\gamma}{\gamma - \omega^*(\beta - \gamma)} \right)^{\frac{1}{q}}, \frac{M}{\gamma} \left( \frac{\beta-\gamma}{\gamma - \omega^*(\beta - \gamma)} \right)^{\frac{1}{q}} \right)
\end{equation}
which is only physical meaningful if $0 < \beta - \gamma < \frac{\gamma}{\omega^*}$. We will consider the ``With frustration'' system in \eqref{seir-direct-reduced} and reduced to the ``No frustration'' case by taking $\omega^*=0$. 
 

We have the following result.



\begin{theorem}
\label{theorem1}
Consider the reduced direct SEIR behavior-perception models \eqref{seir-direct-reduced}. The following behaviors are possible:
\begin{enumerate}
    \item 
    If $\beta < \gamma$ then there is only the disease-free steady state and it is asymptotically stable.
    \item
    If $0 < \beta - \gamma < \frac{\gamma}{\omega^*}$ then the disease-free steady state is unstable, and the endemic steady state is positive and asymptotically stable. Furthermore, trajectories near the endemic steady state may not exhibit oscillatory behavior.
    \item
    If $\beta - \gamma > \frac{\gamma}{\omega^*}$ then there is only the disease-free steady state and it is unstable, and solutions become unbounded.
\end{enumerate}
\end{theorem}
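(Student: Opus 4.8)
\section*{Proof proposal}

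The plan is to analyze the ``with frustration'' system in \eqref{seir-direct-reduced} directly and recover the ``no frustration'' case by setting $\omega^* = 0$ (for which the threshold $\gamma/\omega^*$ becomes $+\infty$ and part~3 is vacuous). Write the $E$-equation as $\dot E = F(E)\,I - \lambda E$ with $F(E) = \beta\bigl(1 - \tfrac{(\lambda E)^q}{M^q + (1+\omega^*)(\lambda E)^q}\bigr)$. Since the Hill term is strictly increasing on $E \ge 0$, equals $0$ at $E = 0$, and has supremum $\tfrac{1}{1+\omega^*}$, the map $F$ is strictly decreasing with $F(0) = \beta$ and $\inf_{E\ge 0}F(E) = F_\infty := \tfrac{\beta\omega^*}{1+\omega^*}$, where one checks $\gamma - F_\infty = \tfrac{\gamma - \omega^*(\beta-\gamma)}{1+\omega^*}$. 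A glance at the axes ($\dot E = \beta I \ge 0$ when $E = 0$; $\dot I = \lambda E \ge 0$ when $I = 0$) shows the nonnegative quadrant is forward invariant, so throughout we may take $E(t), I(t) \ge 0$.

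Next I would count equilibria and handle the local stability via the trace--determinant criterion cited in Appendix~\ref{app:stability}. An interior steady state must satisfy $\lambda\bar E = \gamma\bar I$ and $F(\bar E) = \gamma$; the latter forces the Hill term to equal $1 - \gamma/\beta$, which lies in the attainable range $[0,\tfrac{1}{1+\omega^*})$ exactly when $0 < \beta - \gamma < \gamma/\omega^*$, matching the positivity condition on \eqref{endemic-ss1}. Hence in parts~1 and~3 the origin is the only steady state. The Jacobian at the disease-free state is $\left(\begin{smallmatrix} -\lambda & \beta \\ \lambda & -\gamma \end{smallmatrix}\right)$, independent of $q$ and $\omega^*$ (the $F'(0)I$ term vanishes since $I = 0$), with trace $-(\lambda+\gamma) < 0$ and determinant $\lambda(\gamma - \beta)$; so the origin is asymptotically stable when $\beta < \gamma$ (part~1) and a saddle when $\beta > \gamma$ (the instability claims in parts~2 and~3).

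For part~2, when $0 < \beta - \gamma < \gamma/\omega^*$ the ratio in \eqref{endemic-ss1} is positive, so $\bar E, \bar I > 0$; using $F(\bar E) = \gamma$ the Jacobian there is $\left(\begin{smallmatrix} F'(\bar E)\bar I - \lambda & \gamma \\ \lambda & -\gamma \end{smallmatrix}\right)$. Setting $a := F'(\bar E)\bar I < 0$ (as $F' < 0$, $\bar I > 0$), the trace is $a - \lambda - \gamma < 0$ and the determinant is $-\gamma a > 0$, giving asymptotic stability. For the non-oscillation assertion I would exhibit the discriminant as a sum of squares, $(\mathrm{tr})^2 - 4\det = (a - \lambda - \gamma)^2 + 4\gamma a = (a + \gamma - \lambda)^2 + 4\lambda\gamma > 0$, so both eigenvalues are real and no spiralling near the endemic state is possible.

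The remaining piece, unboundedness in part~3, is the main obstacle, being global rather than local. The idea is a comparison argument: from $F(E) \ge F_\infty$ and $E, I \ge 0$ we get $\dot E \ge F_\infty I - \lambda E$ and $\dot I = \lambda E - \gamma I$, whose right-hand side is a cooperative (Metzler) linear field with matrix $A = \left(\begin{smallmatrix} -\lambda & F_\infty \\ \lambda & -\gamma \end{smallmatrix}\right)$. Its determinant $\lambda(\gamma - F_\infty)$ is negative precisely when $\beta - \gamma > \gamma/\omega^*$, so $A$ has a positive dominant eigenvalue $\mu$ with strictly positive left-eigenvector $u$. Since a nonzero nonnegative initial condition makes both $E$ and $I$ positive after a finite transient $t_0$, the Kamke comparison theorem for quasimonotone systems gives $E(t) \ge \tilde E(t)$, $I(t) \ge \tilde I(t)$ for $t \ge t_0$, where $(\tilde E,\tilde I)$ solves $\dot{\tilde y} = A\tilde y$ with $\tilde y(t_0) = (E(t_0), I(t_0))$. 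Then $\tfrac{d}{dt}\langle u, \tilde y\rangle = \mu\langle u, \tilde y\rangle$ with $\langle u, \tilde y(t_0)\rangle > 0$, so $\langle u, (E(t),I(t))\rangle \ge e^{\mu(t-t_0)}\langle u, \tilde y(t_0)\rangle \to \infty$, forcing $(E(t), I(t))$ to be unbounded. I expect the only genuinely delicate point to be the careful invocation of the comparison principle (verifying quasimonotonicity and the positivity needed to start the comparison); the rest reduces to the planar trace--determinant test.
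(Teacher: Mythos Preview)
Your proposal is correct. Parts~1 and~2 follow essentially the same route as the paper: the trace--determinant test at the disease-free state, and at the endemic state the observation that the Jacobian has negative trace, positive determinant, and strictly positive discriminant. Your abstract packaging via $a = F'(\bar E)\bar I < 0$ and the completion-of-the-square $(\mathrm{tr})^2 - 4\det = (a+\gamma-\lambda)^2 + 4\lambda\gamma$ is a cleaner version of the paper's explicit computation, which writes out $a$ in terms of $\beta,\gamma,\omega^*,q$ and arrives at the same positive discriminant in expanded form.

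The genuine difference is in part~3. You bound $F(E)$ below by $F_\infty = \beta\omega^*/(1+\omega^*)$ and invoke the Kamke comparison principle against the linear cooperative system $\dot{\tilde y} = A\tilde y$, then use the positive Perron left-eigenvector of the Metzler matrix $A$ to extract exponential growth. The paper instead observes directly that $F(E) > \gamma$ for all $E \ge 0$ in the uncontrolled regime, so $\tfrac{d}{dt}(E+I) = (F(E)-\gamma)I > 0$ whenever $I > 0$; it then argues that a bounded, increasing $E+I$ would have to converge to a steady state, of which there are none with $E+I > 0$. The paper's argument is more elementary (no comparison theorem or Perron--Frobenius needed) but gives only qualitative unboundedness; your approach imports more machinery but yields an explicit exponential lower rate $\mu > 0$ and makes the mechanism (domination by an unstable linear system) transparent. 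Both are valid, and the quasimonotonicity check you flagged as delicate is indeed routine here since both off-diagonal partials are strictly positive.
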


\begin{proof}

We consider the Jacobian of the system \eqref{seir-direct-reduced} evaluated at the disease-free steady state and the endemic steady state \eqref{endemic-ss1}. After simplifying, we have
\begin{equation}
    \label{jacobian-seir-direct-reduced}
    J_{df} = \left[ \begin{array}{cc} -\lambda & \beta \\ \lambda & - \gamma \end{array} \right] \; \mbox{ and } \; J_{end} = \left[ \begin{array}{cc} -\lambda \left( \frac{(\beta-\gamma)(\gamma-\omega^*(\beta-\gamma))q- \gamma \beta}{\gamma \beta}\right) & \gamma \\ \lambda & - \gamma \end{array} \right].
\end{equation}
We have $\mbox{tr}(J_{df}) = -(\lambda + \gamma) <0$ and $\mbox{det}(J_{df}) = -\lambda(\beta-\gamma)$. It follows that disease-free steady state is a stable node if $\beta < \gamma$ (no outbreak) and a saddle if $\beta > \gamma$ (controlled or uncontrolled outbreak). For the endemic steady state, we have
\[
\begin{split}
    \mbox{det}(J_{end}) & = \frac{(\beta - \gamma)(\gamma - \omega^*(\beta - \gamma))\lambda q}{\beta}\\
    \mbox{tr}(J_{end}) & = -\frac{(\beta - \gamma)(\gamma - \omega^*(\beta - \gamma)) \lambda q + \gamma \beta (\gamma + \lambda)}{\gamma \beta}.
\end{split}
\]
Since we are only concerned with the endemic steady state when $0 < \beta - \gamma < \frac{\gamma}{\omega^*}$, we have that det$(J_{end}) > 0$ and tr$(J_{end}) < 0$. It follows that the endemic steady state \eqref{endemic-ss1} of \eqref{seir-direct-reduced} is asymptotically stable whenever it exists.

In order for trajectories near the endemic steady state \eqref{endemic-ss1} to oscillate, we require that tr$(J_{end})^2 - 4$det$(J_{end}) < 0$. To analyze this condition, we compute
\begin{equation}
    \label{tr-det}
\mbox{tr}(J_{end})^2 - 4\mbox{det}(J_{end}) = \frac{[\lambda (\beta-\gamma)(\gamma - \omega^*(\beta-\gamma))q + \gamma \beta (\lambda - \gamma)]^2 + 4 \lambda \gamma^3\beta^2}{\gamma^2 \beta^2} > 0.
\end{equation}
It follows that the linearized system may not permit oscillations around the endemic steady state \eqref{endemic-ss1} so that \eqref{seir-direct-reduced} does not permit the oscillatory behavior.

To prove solutions become unbounded if $\beta - \gamma > \frac{\gamma}{\omega^*}$ (uncontrolled outbreak) we note that this condition implies that
\[(\lambda E)^q > 0 > \frac{(\beta - \gamma)M^q}{\gamma - \omega^*(\beta-\gamma)}\]
for $E > 0$. It follows that
\[\frac{(\lambda E)^q}{M^q + (1 + \omega^*) (\lambda E)^q} < \frac{\beta - \gamma}{\beta}\]
so that
\[\beta \left( 1 - \frac{(\lambda E)^q}{M^q + (1 + \omega^*) (\lambda E)^q} \right) > \gamma.\]
Now consider the quantity $E + I$. From \eqref{reduced-direct} we have that
\[\frac{d}{dt} ( E + I) = \left[ \beta \left( 1 - \frac{(\lambda E)^q}{M^q + (1 + \omega^*) (\lambda E)^q} \right) -  \gamma \right] I > 0.\]
It follows that, provided $E > 0$, then we have that $E + I$ is continually increasing for all time. Now suppose that there is a least upper bound $E_{lim} + I_{lim} > 0$ such that $E(t) + I(t) \leq E_{lim} + I_{lim}$. It follows that $E(t) + I(t)$ approaches a single point on this set and by continuity of \eqref{reduced-direct}, this point must be a steady state. When $\beta - \gamma > \frac{\gamma}{\omega^*}$, however, there is only the disease-free steady state and this does not lie on this set. It follows that there is not a least upper bound $E_{lim} + I_{lim}$ to $E(t) + I(t)$. Consequently, it follows that
\[\lim_{t \to \infty} E(t) + I(t) = \infty.\]
That is, the overall level of infection in the population is unbounded.
\end{proof}

\subsection{One Delay}
\label{app:onedelay}

Assuming that fear of infection ($P_I$) operates on a significantly faster timescale than the remainder of the variables, we can set $\frac{dP_I}{dt} = 0$ in \eqref{reduced-de} to get the \emph{reduced one delay SEIR behavior-perception systems}:
\begin{equation} \small
    \label{seir-onedelay}
    \begin{array}{cc}  \mbox{\textbf{\underline{No frustration}}} & \mbox{\textbf{\underline{With frustration}}} \\ \\
    \left\{ \; \;
    \begin{split}
    \frac{dE}{dt} & = \beta\left(1 - \omega \right) I - \lambda E \\
    \frac{dI}{dt} & = \lambda E - \gamma I\\
    \frac{d\omega}{dt} & = k_{\omega} \left( \frac{(\lambda E)^q}{M^q + (\lambda E)^q} - \omega \right)
    \end{split} \right.
 &
 \left\{ \; \;
    \begin{split} 
    \frac{dE}{dt} & = \beta\left(1 - \omega \right) I - \lambda E \\
    \frac{dI}{dt} & = \lambda E - \gamma I\\
    \frac{d\omega}{dt} & = k_{\omega} \left( \frac{(\lambda E)^q}{M^q + (\lambda E)^q}(1 - \omega^* P_\omega) - \omega \right) \\
    \frac{dP_{\omega}}{dt} & = k_{P_{\omega}} \left( \omega - P_{\omega} \right).
    \end{split} \right.
    \end{array}
\end{equation}
The system \eqref{seir-onedelay} has the disease-free steady state $(\bar{E},\bar{I},\bar{\omega},\bar{P_{\omega}})_{df} = (0,0,0,0)$ and the endemic steady state
\begin{equation}
    \label{endemic-ss2}
    (\bar{E},\bar{I},\bar{\omega},\bar{P}_{\omega})_{end} = \left( \frac{M}{\lambda} \left( \frac{\beta-\gamma}{\gamma- \omega^*(\beta-\gamma)} \right)^{\frac{1}{q}}, \frac{M}{\gamma} \left( \frac{\beta-\gamma}{\gamma- \omega^*(\beta-\gamma)} \right)^{\frac{1}{q}}, \frac{\beta - \gamma}{\beta}, \frac{\beta - \gamma}{\beta} \right)
\end{equation}
which again is only physical meaningful if $0 < \beta - \gamma < \frac{\gamma}{\omega^*}$. Once again, we will consider only the ``with frustration'' case and limit to the ``no frustration'' case by taking $\omega^* = 0$. 

We have the following result.

\begin{theorem}
\label{theorem2}
Consider the reduced direct SEIR behavior-perception models \eqref{seir-direct-reduced}. The following behaviors are possible:
\begin{enumerate}
    \item 
    If $\beta < \gamma$ then there is only the disease-free steady state and it is asymptotically stable.
    \item
    If $0 < \beta - \gamma < \frac{\gamma}{\omega^*}$ then the disease-free steady state is unstable, and the endemic steady state is positive and asymptotically stable. Furthermore, trajectories near the endemic steady state may or may not exhibit oscillatory behavior depending on the parameter values.
    \item
    If $\beta - \gamma > \frac{\gamma}{\omega^*}$ then there is only the disease-free steady state and it is unstable.
\end{enumerate}
\end{theorem}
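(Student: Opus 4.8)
The plan is to follow the strategy of Theorem~\ref{theorem1}: linearize the ``with frustration'' system \eqref{seir-onedelay} at the disease-free state $(0,0,0,0)$ and at the endemic state \eqref{endemic-ss2}, and read off stability from the characteristic polynomials via the trace--determinant test and the Routh--Hurwitz criterion, recovering the ``no frustration'' case by setting $\omega^*=0$. At the disease-free state the sub-block of the Jacobian giving the dependence of $(\dot E,\dot I)$ on $(\omega,P_\omega)$ is zero (because $\bar I=0$ and $g(0)=0$), so the Jacobian is block lower-triangular with respect to $\{E,I\}\cup\{\omega,P_\omega\}$ and its spectrum is $\{-k_\omega,-k_{P_\omega}\}$ together with that of $\left[\begin{smallmatrix}-\lambda&\beta\\\lambda&-\gamma\end{smallmatrix}\right]$, which has trace $-(\lambda+\gamma)<0$ and determinant $\lambda(\gamma-\beta)$. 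Hence the disease-free state is asymptotically stable for $\beta<\gamma$ and a saddle for $\beta>\gamma$; since the components of \eqref{endemic-ss2} are real and nonnegative precisely when $0<\beta-\gamma<\gamma/\omega^*$, this already gives statements~1 and~3 and the instability assertion in statement~2.

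For the stability of the endemic state I would compute the Jacobian of \eqref{seir-onedelay} at \eqref{endemic-ss2}, using the identities $1-\bar\omega=\gamma/\beta$, $\lambda\bar E=\gamma\bar I$, $g(\lambda\bar E)(1-\omega^*\bar P_\omega)=\bar\omega$, and $x\,g'(x)=q\,g(x)(1-g(x))$ to bring it to the form
\begin{equation*}
J_{end}=\begin{pmatrix}-\lambda&\gamma&-\beta\bar I&0\\\lambda&-\gamma&0&0\\a&0&-k_\omega&-b\\0&0&k_{P_\omega}&-k_{P_\omega}\end{pmatrix},\qquad a=\frac{k_\omega\,q\,\bar\omega\,(1-g(\lambda\bar E))}{\bar E}>0,\quad b=k_\omega\,\omega^*\,g(\lambda\bar E)\ge 0,
\end{equation*}
where $b=0$ exactly in the no-frustration case (and then the $P_\omega$-row decouples, leaving a $3\times3$ problem). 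Expanding $\det(\mu I-J_{end})$ along the last column produces a quartic $\mu^4+c_3\mu^3+c_2\mu^2+c_1\mu+c_0$ whose coefficients are sums of products of $\lambda,\gamma,k_\omega,k_{P_\omega}$ with the nonnegative quantities $A:=a\beta\bar I$ and $B:=bk_{P_\omega}$; in particular every $c_i$ is strictly positive whenever the endemic state exists (e.g.\ $c_0=A\gamma k_{P_\omega}$), and one checks directly that $c_3c_2-c_1$ equals a manifestly positive combination of rates plus $A(\lambda+k_\omega)+B(k_\omega+k_{P_\omega})$. So the only remaining Routh--Hurwitz inequality to establish is $c_1(c_3c_2-c_1)-c_3^2c_0>0$.

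This last inequality is the main obstacle. Writing $c_1=p_1+q_1A+r_1B$, $c_3c_2-c_1=p_2+q_2A+r_2B$, and $c_3^2c_0=sA$ with explicit positive constants (e.g.\ $p_1=(\lambda+\gamma)k_\omega k_{P_\omega}$, $p_2=(k_\omega+k_{P_\omega})\big((\lambda+\gamma)^2+(\lambda+\gamma)(k_\omega+k_{P_\omega})+k_\omega k_{P_\omega}\big)$, $q_1=\gamma+k_{P_\omega}$, $q_2=\lambda+k_\omega$, $s=(\lambda+\gamma+k_\omega+k_{P_\omega})^2\gamma k_{P_\omega}$), the expression $c_1(c_3c_2-c_1)-c_3^2c_0$ becomes a quadratic in $(A,B)$ with positive $A^2$-, $B^2$-, and constant coefficients and nonnegative cross terms, so its only possibly negative contribution is the linear term $(p_1q_2+q_1p_2-s)A$. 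Positivity for all $A,B\ge0$ therefore reduces to the scalar estimate $p_1q_2+q_1p_2\ge s$ or, failing that, $(p_1q_2+q_1p_2-s)^2<4q_1q_2p_1p_2$; I expect verifying this bound (by expanding everything in $\lambda,\gamma,k_\omega,k_{P_\omega}$) to be the one genuinely delicate computation. As a consistency check, in the no-frustration case the analogous step is the transparent cubic condition $(k_\omega+\lambda+\gamma)k_\omega(\lambda+\gamma)+A(k_\omega+\lambda)>0$.

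Finally, for the oscillation claim I would contrast with Theorem~\ref{theorem1}, where the planar discriminant was a sum of squares and hence forced monotone approach; here the characteristic polynomial of $J_{end}$ can acquire a complex conjugate pair of roots. A clean way to see this is to take $\omega^*=0$ and let $q\to\infty$ with the rates fixed: then $A$ is of order $q$, and to leading order the cubic factor $\mu(\mu+k_\omega)(\mu+\lambda+\gamma)+A(\mu+\gamma)$ behaves like $(\mu+\gamma)\big(\mu^2+(k_\omega+\lambda)\mu+A\big)$, whose inner quadratic has negative discriminant, whereas for small $A$ all roots are real. Hence trajectories near the endemic state may or may not oscillate, which completes statement~2.
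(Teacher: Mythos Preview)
Your plan coincides with the paper's: block lower-triangular linearization at the disease-free state (giving eigenvalues $-k_\omega,-k_{P_\omega}$ together with those of the $2\times 2$ infection block) and the Routh--Hurwitz criterion at the endemic state. The differences are only presentational. For the Routh--Hurwitz verification the paper does not introduce your $(A,B)$-parametrization; it simply asserts that the third and fourth first-column entries expand and factor into manifestly positive pieces (each containing a factor $\beta-\gamma$ or $\beta-\omega^*(\beta-\gamma)$) and defers the symbolic check to supplementary computational material, while the fifth entry is displayed explicitly as $k_{P_\omega}k_\omega\lambda(\beta-\gamma)(\gamma-\omega^*(\beta-\gamma))q/(\beta-\omega^*(\beta-\gamma))$. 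So neither you nor the paper actually writes out the delicate inequality $c_1(c_3c_2-c_1)>c_3^2c_0$ in the main text, but your reduction to the single scalar estimate $p_1q_2+q_1p_2\ge s$ (or its discriminant variant) localizes more sharply what must be verified. For the oscillation claim the paper exhibits one numerical parameter set ($\beta=0.2$, $\lambda=\gamma=k_\omega=k_{P_\omega}=0.1$, $q=10$, $M=100$, $\omega^*=0$) producing eigenvalues $-0.1,-0.1,-0.1\pm0.2i$, rather than your asymptotic $q\to\infty$ argument; both routes are valid, yours being a bit more conceptual and the paper's more concrete.
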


\begin{proof}
The Jacobian of \eqref{seir-onedelay-reduced} evaluated at the disease-free steady state is
\begin{equation}
    \label{jacobian-df}
J_{df} = \left[\begin{array}{cccc} -\lambda & \beta & 0 & 0 \\ \lambda & -\gamma & 0 & 0 \\ 0 & 0 & -k_{\omega} & 0 \\ 0 & 0 & k_{P_{\omega}} & -k_{P_{\omega}}
\end{array} \right].
\end{equation}
The diagonal structure of \eqref{jacobian-df} means the eigenvalues can be analysed by considering the decomposed matrices
\begin{equation}
\label{decomposed1}
\left[ \begin{array}{cc} - \lambda & \beta \\ \lambda & - \gamma \end{array} \right] \; \mbox{ and } \; \left[ \begin{array}{cc} -k_{\omega} & 0 \\ k_{P_{\omega}} & -k_{P_{\omega}} \end{array} \right].
\end{equation}
The right matrix in \eqref{decomposed1} has the eigenvalues $\lambda_{1,2} = -k_{\omega}, -k_{P_{\omega}} < 0$ while the eigenvalues of the left matrix in \eqref{decomposed1} were considered in the proof of Theorem \ref{theorem1}. It follows that the disease-free steady state is a stable node if $\beta < \gamma$ and a saddle if $\beta > \gamma$.

The Jacobian of \eqref{seir-onedelay-reduced} evaluated at the endemic steady state $(\bar{E},\bar{I},\bar{\omega},\bar{P_I})$ \eqref{endemic-ss2} is
\begin{equation}
    \label{jacobian2}
J_{end} = \left[\begin{array}{cccc} -\lambda & \gamma & -\frac{\beta M}{\gamma} \left( \frac{\beta - \gamma}{\gamma - \omega^*(\beta - \gamma)} \right)^{\frac{1}{q}} & 0 \\ \lambda & -\gamma & 0 & 0 \\ \frac{\lambda k_{\omega}(\beta - \gamma)^{1 - \frac{1}{q}} (\gamma - \omega^*(\beta - \gamma))^{1 + \frac{1}{q}}}{\beta M ( \beta - \omega^*(\beta - \gamma))} & 0 & -k_{\omega} & -\frac{\omega^*(\beta-\gamma)k_{\omega}}{\beta - \omega^*(\beta-\gamma)} \\
0 & 0 & k_{P_{\omega}} & - k_{P_{\omega}}
\end{array} \right]
\end{equation}
It is not feasible to determine the eigenvalues of \eqref{jacobian2} directly. The Routh-Hurwitz criterion, however, can be applied to the characteristic polynomial of \eqref{jacobian2}. The first two entries of the Routh-Hurwitz table are $1$ and $\lambda + \gamma + k_{P_{\omega}}+k_{\omega}$, which are trivially positive. The third and fourth entries can be expanded and factored into terms which are either positive of contain factors of $\beta - \gamma$ of $\beta - \omega^*(\beta - \gamma)$ (see supplemental computational material). We notice that the endemic condition $0 < \beta - \gamma < \frac{\gamma}{\omega^*}$ implies that
\[\beta - \omega^*(\beta - \gamma) > \gamma - \omega^*(\beta - \gamma) > \gamma - \omega^* \left( \frac{\gamma}{\omega^*}\right) > 0\]
so that the third and fourth Routh-Hurwitz entries are positive. The fifth entry is
\[\frac{k_{P_{\omega}} k_{\omega}\lambda(\beta-\gamma)(\gamma-\omega^*(\beta-\gamma) )q}{\beta-\omega^*(\beta-\gamma)}\]
This is positive, so that all the first-column entries of the Routh-Hurwitz table are positive. It follows that all of the eigenvalues of \eqref{jacobian-seir-direct-reduced} have negative real part so that the endemic steady state of \eqref{seir-onedelay} is locally asymptotically stable. It follows that the system with one delay is not consistent with sustained oscillations around the endemic steady state \eqref{endemic-ss2}.

It is challenging to find explicit conditions on the parameters which guarantee \eqref{jacobian2} has complex eigenvalues. It can be checked numerically, however, that this is possible. One example set of parameters is $\beta = 0.2$, $\lambda = 0.1$, $\gamma = 0.1$, $q = 10$, $M = 100$, $k_{\omega} = 0.1$, $k_{P_{\omega}} = 0.1$, $\omega^* = 0$. Substituted in \eqref{jacobian2}, this produces the eigenvalues: $\lambda_1 = -0.1$, $\lambda_2 = -0.1$, $\lambda_{3,4} = -0.1 \pm 0.2 i$. It follows that the one-delay models \eqref{seir-onedelay} permit transient waves of infection.
\end{proof}

\subsection{Two Delays}
\label{app:twodelays}

Now consider the full two delay model, with and without frustration with social distancing:
\begin{equation} \small
    \label{seir-full}
    \begin{array}{cc}  \mbox{\textbf{\underline{No frustration}}} & \mbox{\textbf{\underline{With frustration}}} \\ \\
    \left\{ \; \;
    \begin{split}
    \frac{dE}{dt} & = \beta\left(1 - \omega \right) I - \lambda E \\
    \frac{dI}{dt} & = \lambda E - \gamma I\\
    \frac{d\omega}{dt} & = k_{\omega} \left( P_I - \omega \right) \\
    \frac{dP_{I}}{dt} & = k_{P_I} \left( \frac{(\lambda E)^q}{M^q + (\lambda E)^q} - P_I \right)
    \end{split} \right.
 &
 \left\{ \; \;
    \begin{split} 
    \frac{dE}{dt} & = \beta\left(1 - \omega \right) I - \lambda E \\
    \frac{dI}{dt} & = \lambda E - \gamma I\\
    \frac{d\omega}{dt} & = k_{\omega} \left( P_I(1 - \omega^* P_\omega) - \omega \right)  \\
    \frac{dP_{I}}{dt} & = k_{P_I} \left( \frac{(\lambda E)^q}{M^q + (\lambda E)^q} - P_I \right)\\
    \frac{dP_{\omega}}{dt} & = k_{P_{\omega}} \left( \omega - P_{\omega} \right).
    \end{split} \right.
    \end{array}
\end{equation}
The system \eqref{seir-full} has the disease-free steady state $(\bar{E},\bar{I},\bar{\omega},\bar{P_I},\bar{P_{\omega}})_{df} = (0,0,0,0,0)$. The endemic steady state of \eqref{seir-full} is
\begin{equation}
    \label{endemic-ss3}
    \small
    (\bar{E},\bar{I},\bar{\omega},\bar{P_I},\bar{P_{\omega}})_{end} = \left( \frac{M}{\lambda} \left( \frac{\beta-\gamma}{\gamma-\omega^*(\beta - \gamma)} \right)^{\frac{1}{q}}, \frac{M}{\gamma} \left( \frac{\beta-\gamma}{\gamma-\omega^*(\beta - \gamma)} \right)^{\frac{1}{q}}, \frac{\beta - \gamma}{\beta},\frac{\beta-\gamma}{\beta - \omega^*(\beta - \gamma)}, \frac{\beta - \gamma}{\beta} \right)
\end{equation}
which is physically meaningful if and only if $0 < \beta - \gamma < \frac{\gamma}{\omega^*}$. We consider the ``with frustration'' case with the understanding that we can limit to the ``no frustration'' case by taking $\omega^* = 0$. 

We have the following result.

\begin{theorem}
\label{theorem2}
Consider the reduced direct SEIR behavior-perception models \eqref{seir-direct-reduced}. The following behaviors are possible:
\begin{enumerate}
    \item 
    If $\beta < \gamma$ then there is only the disease-free steady state and it is asymptotically stable.
    \item
    If $0 < \beta - \gamma < \frac{\gamma}{\omega^*}$ then the disease-free steady state is unstable, and the endemic steady state is positive. Trajectories near the endemic steady state may converge exponentially toward it, exhibit damped oscillations, or converge toward nearby limit cycles, depending on the parameter values.
    \item
    If $\beta - \gamma > \frac{\gamma}{\omega^*}$ then there is only the disease-free steady state and it is unstable.
\end{enumerate}
\end{theorem}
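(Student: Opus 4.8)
The plan is to follow exactly the strategy used for the no-delay and one-delay reduced models: analyze the disease-free steady state $(\bar E,\bar I,\bar\omega,\bar P_I,\bar P_\omega)_{df}=(0,0,0,0,0)$ and the endemic steady state \eqref{endemic-ss3} of the two delay system \eqref{seir-full} by linearization, and supplement the parts that resist a closed-form treatment with explicit numerical witnesses.

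First I would dispose of part (1) and the instability assertions in parts (2) and (3) by examining the disease-free steady state. Evaluating the Jacobian of \eqref{seir-full} at the origin, the equations for $(\omega,P_I,P_\omega)$ decouple from those for $(E,I)$ at the linear level, since $g(\lambda E)$ and the product $P_I(1-\omega^* P_\omega)$ vanish to first order; what remains is a block-triangular matrix whose $(E,I)$ block is precisely $\left[\begin{smallmatrix}-\lambda & \beta\\ \lambda & -\gamma\end{smallmatrix}\right]$ from the proof of Theorem~\ref{theorem1}, and whose $(\omega,P_I,P_\omega)$ block is lower triangular with diagonal entries $-k_\omega,-k_{P_I},-k_{P_\omega}<0$. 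Hence the disease-free steady state inherits the eigenvalue count of the planar block: it is a stable node when $\beta<\gamma$ and a saddle when $\beta>\gamma$. Reading off \eqref{endemic-ss3}, the endemic steady state fails to be positive unless $\beta-\gamma>0$ and $\gamma-\omega^*(\beta-\gamma)>0$, so when $\beta<\gamma$ or $\beta-\gamma>\gamma/\omega^*$ the origin is the only steady state. This settles parts (1) and (3) in full and the first half of part (2); the positivity of \eqref{endemic-ss3} for $0<\beta-\gamma<\gamma/\omega^*$ then follows from $\beta-\omega^*(\beta-\gamma)>\gamma-\omega^*(\beta-\gamma)>\gamma-\omega^*(\gamma/\omega^*)=0$, exactly as in the one-delay argument.

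The substance of part (2) is the trichotomy of local behaviors near the endemic equilibrium. I would compute the $5\times5$ Jacobian $J_{end}$ at \eqref{endemic-ss3}, form its degree-five characteristic polynomial, and build the Routh-Hurwitz table. In a regime of small feedback rates and/or small $q$ (the quasi-stable regime of Figure~\ref{figure2}(a)), I expect the first-column Routh-Hurwitz entries to factor into manifestly positive terms together with the factors $\beta-\gamma$ and $\beta-\omega^*(\beta-\gamma)$, which the endemic condition forces to be positive, so the endemic steady state is asymptotically stable; within this regime I would exhibit one explicit parameter set (along the lines of the one-delay example, e.g. with $\omega^*=0$) whose spectrum is real and negative, giving exponential convergence, and a second set whose dominant eigenvalues form a complex conjugate pair with negative real part, giving damped oscillations. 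For the limit-cycle case I would exhibit a parameter set with larger $q$ (as in Figure~\ref{figure2}(c)) for which one first-column Routh-Hurwitz entry is negative, so $J_{end}$ has an eigenvalue of positive real part and the endemic steady state is unstable; since the destabilization occurs through a complex pair crossing the imaginary axis, a Hopf bifurcation takes place, and numerical integration of \eqref{seir-full} (equivalently \eqref{reduced-de}) shows nearby trajectories spiralling onto a limit cycle.

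The main obstacle is that the degree-five Routh-Hurwitz table for $J_{end}$ has no tractable closed form, so there is no explicit stability boundary in parameter space; the argument must therefore be organized as a sequence of ``there exist parameters where $\dots$'' statements, each certified either by a symbolic factorization of the relevant Routh-Hurwitz entry or by a concrete numerical eigenvalue computation, just as was done for the one-delay model. Relatedly, proving that the emergent cycle is \emph{attracting}---rather than merely that the equilibrium is unstable---would require computing the first Lyapunov coefficient at the Hopf point, which I would not carry out here; the convergence-toward-a-limit-cycle assertion in part (2) is thus supported by the simulations in Figure~\ref{figure2} rather than by a full bifurcation-theoretic proof, consistent with the open problem flagged in the Discussion.
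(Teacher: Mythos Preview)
Your proposal follows the paper's approach closely: block-triangular linearization at the disease-free state to settle parts (1), (3), and the instability half of (2), followed by a Routh--Hurwitz analysis of the $5\times 5$ Jacobian at the endemic state for the trichotomy in part (2). Two small corrections: the claim that $P_I(1-\omega^* P_\omega)$ ``vanishes to first order'' is not right (its $P_I$-derivative is $1$ at the origin), and the $(\omega,P_I,P_\omega)$ block is not lower triangular in the ordering you wrote---though it is in the ordering $(P_I,\omega,P_\omega)$, so your eigenvalue conclusion survives.

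The one substantive difference is in how instability of the endemic state is established. You propose to certify it by exhibiting a single numerical parameter set with a negative first-column Routh--Hurwitz entry. The paper does more: it factors the \emph{third} Routh--Hurwitz entry symbolically as
\[
\frac{-n(q,1)\,q + n(q,0)}{d(q,0)},
\]
where $n(q,1)$, $n(q,0)$, $d(q,0)$ are explicit expressions in the remaining parameters that are manifestly positive under the endemic condition $0<\beta-\gamma<\gamma/\omega^*$. This shows that for \emph{every} choice of $\beta,\gamma,\lambda,k_\omega,k_{P_I},k_{P_\omega},\omega^*$ in the controlled-outbreak range, taking $q>n(q,0)/n(q,1)$ forces a sign change and hence instability. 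So where you would produce a witness, the paper produces a threshold valid across parameter space; your discussion of the Hopf/Lyapunov-coefficient limitation, on the other hand, matches the paper's own acknowledgement that attracting limit cycles are supported numerically rather than proved.
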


\begin{proof}
The Jacobian of \eqref{seir-full} evaluated at the disease-free steady state is
\begin{equation}
    \label{jacobian-df3}
    \left[
    \begin{array}{ccccc}
    -\lambda & \beta & 0 & 0 & 0 \\
    \lambda & - \gamma & 0 & 0 & 0 \\
    0 & 0 & -k_{\omega} & k_{\omega} & 0 \\
    0 & 0 & 0 & -k_{P_I} & 0 \\
    0 & 0 & k_{P_{\omega}} & 0 & -k_{P_{\omega}}
    \end{array} \right]
\end{equation}
The diagonal structure of \eqref{jacobian-df3} means the eigenvalues can be analysed by considering the decomposed matrices
\begin{equation}
\label{decomposed2}
\left[ \begin{array}{cc} - \lambda & \beta \\ \lambda & - \gamma \end{array} \right] \; \mbox{ and } \; \left[ \begin{array}{ccc} -k_{\omega} & k_{\omega} & 0 \\
0 & -k_{P_I} & 0 \\
k_{P_{\omega}} & 0 & -k_{P_{\omega}} \end{array} \right].
\end{equation}
The right matrix in \eqref{decomposed2} has the eigenvalues $\lambda_{1,2,3} = -k_{\omega}, -k_{P_I}, -k_{P_{\omega}} < 0$ while the eigenvalues of the left matrix in \eqref{decomposed2} were considered in the proof of Theorem \ref{theorem1}. It follows that the disease-free steady state is a stable node if $\beta < \gamma$ and a saddle if $\beta > \gamma$.

The Jacobian evaluated at the endemic steady state  $(\bar{E},\bar{I},\bar{\omega},\bar{P_I},\bar{P_{\omega}})_{end}$ \eqref{endemic-ss3} is
\begin{equation}
    \label{jacobian3}
    \left[
    \begin{array}{ccccc}
    -\lambda & \gamma & - \frac{M\beta}{\gamma} \left( \frac{\beta - \gamma}{\gamma-\omega^*(\beta-\gamma)} \right)^{\frac{1}{q}} & 0 & 0\\
    \lambda & -\gamma & 0 & 0 & 0 \\
    0 & 0 & -k_{\omega} & k_{\omega} \left(\frac{\beta - \omega^*(\beta-\gamma)}{\beta} \right) & -\frac{k_{\omega} \omega^* (\beta-\gamma)}{\beta - \omega^*(\beta-\gamma)} \\
    \frac{\lambda k_{P_I} q (\beta - \gamma)^{1 - \frac{1}{q}}(\gamma - \omega^*(\beta - \gamma)^{1 + \frac{1}{q}}}{M(\beta - \omega^*(\beta-\gamma))^2} & 0 & 0 & -k_{P_I} & 0 \\
    0 & 0 & k_{P_{\omega}} & 0 & - k_{P_{\omega}} \end{array} \right]
\end{equation}
The eigenvalues of \eqref{jacobian3} cannot be reasonably computed directly. The Routh-Hurwitz condition, however, can be utilized to determine if the matrix permits eigenvalues with positive real part. The first two column entries of the Routh-Hurwitz table are $1$ and $\lambda + \gamma +k_{P_I}+k_{P_{\omega}}+k_{\omega}$, both of which are positive. The third entry can be factored as
\begin{equation}
    \label{W2}\frac{-n(q,1)\cdot q + n(q,0)}{d(q,0)}
    \end{equation}
where
\[
\begin{split}
n(q,1) & =k_{P_I} \lambda k_{\omega}(\beta-\gamma)(\gamma-\omega^*(\beta-\gamma)))\\
n(q,0) & = \gamma(k_{P_I}+k_{\omega}+k_{P_{\omega}}(\lambda+\gamma+k_{P_{\omega}}+k_{\omega})(\lambda+k_{P_I}+\gamma)(\beta-\omega^*(\beta-\gamma))+\beta\gamma k_{P_{\omega}}k_{\omega}(k_{P_{\omega}} +k_{\omega})\\
d(q,0) & = \gamma(\gamma+k_{P_I}+k_{P_{\omega}}+k_{\omega}+\lambda)(\beta-\omega^*(\beta-\gamma))
\end{split}
\]
For the endemic condition $0 < \beta - \gamma < \frac{\gamma}{\omega^*}$ we have
\[\beta - \gamma > 0, \; \gamma - \omega^*(\beta - \gamma) > 0, \; \mbox{ and } \; \beta - \omega^*(\beta - \gamma) > 0\]
so that $n(q,1) > 0$, $n(q,0) > 0$, and $d(q,0) > 0$. It follows that \eqref{W2} can be made negative by choosing
\[q > \frac{n(q,0)}{n(q,1)}.\]
In particular, for any values of the parameters $\beta, \lambda, \gamma, k_{P_I},$ and $k_{\omega}$ we can choose $q$ sufficiently large so that the first column of the Routh-Hurwitz table has a sign change, so that the endemic steady state $(\bar{E},\bar{I},\bar{\omega},\bar{P_I},\bar{P_{\omega}})_{end}$ \eqref{endemic-ss3} is unstable. The value of $q$ increasing corresponds to the social behavior operating more and more like a switch. Independently, there is capacity for a sign change in the fourth Routh-Hurwitz coefficient; however, the algebra is too complicated to present here directly (see supplemental computational files for numerical results).
\end{proof}

\end{appendices}

\end{document}